\documentclass[11pt]{article}
\usepackage[T1]{fontenc}
\usepackage[latin9]{inputenc}
\usepackage{geometry}
\usepackage{color}
\usepackage{verbatim}
\usepackage{float}
\usepackage{mathtools}
\usepackage{algorithm2e}
\usepackage{amsmath}
\usepackage{amsthm}
\usepackage{amssymb}
\usepackage{setspace}
\usepackage{wasysym}
\usepackage{bbm}
\usepackage{esint}

\usepackage[inline]{enumitem}

\usepackage[authoryear]{natbib}
\usepackage[unicode=true,
 bookmarks=false,
 breaklinks=true,pdfborder={0 0 0},pdfborderstyle={},backref=false,colorlinks=true,citecolor=blue,urlcolor=blue]
 {hyperref}
\hypersetup{
 citecolor=blue, linkcolor=blue}

\makeatletter

\theoremstyle{plain}
\newtheorem{remark}{\protect\remarkname}
\theoremstyle{plain}

\theoremstyle{plain}
\newtheorem{proposition}{\protect\propositionname}
\theoremstyle{plain}
\newtheorem{theorem}{\protect\theoremname}
\theoremstyle{plain}

\theoremstyle{plain}
\newtheorem{definition}{\protect\definitionname}
\theoremstyle{plain}
\newtheorem{lemma}{\protect\lemmaname}
\advance \topmargin by -\headheight
\advance \topmargin by -\headsep

\usepackage[OT1]{fontenc}
\usepackage{amsthm,amsmath,amsfonts}
\usepackage{natbib}
\usepackage{booktabs}
\usepackage{mathrsfs}
\usepackage{graphicx,xcolor}
\usepackage{subcaption}
\usepackage{amssymb,latexsym}
\usepackage{textcomp}

\usepackage[scaled=0.92]{helvet}
\usepackage{times}
\usepackage[final]{microtype}
\usepackage{titling}

\LinesNumbered
\RestyleAlgo{algoruled}

\makeatother

\providecommand{\assumptionname}{Assumption}

\providecommand{\lemmaname}{Lemma}
\providecommand{\propositionname}{Proposition}
\providecommand{\remarkname}{Remark}
\providecommand{\theoremname}{Theorem}
\providecommand{\definitionname}{Definition}
\providecommand{\examplename}{Example}

\newtheorem*{remark*}{Remark}

\begin{document}

\title{CAIRO: Decoupling Order from Scale in Regression}

\author{Harri Vanhems \thanks{McGill University and Mila; Email:  \texttt{harri.vanhems@mail.mcgill.ca}.}
\and 
Yue Zhao \thanks{University of York; Email: \texttt{yue.zhao@york.ac.uk}.}
\and
Peng Shi \thanks{University of Wisconsin; Email: \texttt{pshi@bus.wisc.edu}.}
\and
Archer Y. Yang \thanks{Co-corresponding author. McGill University and Mila; Email: \texttt{archer.yang.yi@gmail.com}.}
}
\maketitle 

\begin{abstract}
Standard regression methods typically optimize a single pointwise objective, such as mean squared error, which conflates the learning of ordering with the learning of scale. This coupling renders models vulnerable to outliers and heavy-tailed noise. We propose \textsc{CAIRO} (\textbf{C}alibrate \textbf{A}fter \textbf{I}nitial \textbf{R}ank \textbf{O}rdering), a framework that decouples regression into two distinct stages. In the first stage, we learn a scoring function by minimizing a scale-invariant ranking loss; in the second, we recover the target scale via isotonic regression. We theoretically characterize a class of "Optimal-in-Rank-Order" objectives --- including variants of RankNet and Gini covariance --- and prove that they recover the ordering of the true conditional mean under mild assumptions. We further show that subsequent monotone calibration recovers the true regression function at the population level and mathematically guarantees that finite-sample predictions are strictly auto-calibrated. Empirically, CAIRO combines the representation learning of neural networks with the robustness of rank-based statistics. It matches the performance of state-of-the-art tree ensembles on tabular benchmarks and significantly outperforms standard regression objectives in regimes with heavy-tailed or heteroskedastic noise.

\end{abstract}

\section{Introduction}
\label{sec: intro}
Supervised learning models are typically evaluated against metrics that reflect specific deployment goals. We focus on two distinct families of such metrics. The first consists of regression-based metrics, like mean squared error (MSE), which reward predictions that are numerically close to the target value. The second consists of rank-based metrics, which reward models for correctly ordering pairs of examples relative to one another. While these objectives are often treated separately, a growing number of applications require models to succeed at both simultaneously.

In classification settings, this dual requirement is common, for instance, in ad click prediction or medical screening, where models must rank candidates effectively while also outputting calibrated probabilities. However, this need also rigorously extends to regression tasks with continuous, real-valued targets. A prime example is insurance pricing: a model must rank policyholders by risk to segment them correctly, but it must also output strictly \emph{auto-calibrated} premiums---meaning the expected actual loss for any predicted premium tier must exactly match the predicted premium ($\mathbb{E}[Y|\hat{m}(X)] = \hat{m}(X)$). Without this strict local unbiasedness, an insurer risks severe financial deficits.

While methodologies for combining calibration and ranking are well-established in classification, they are under-explored for regression with real-valued targets. Consequently, techniques designed for binary or discrete outcomes do not naturally generalize to continuous settings. In response, we introduce \textsc{CAIRO} (\textbf{C}alibrate \textbf{A}fter \textbf{I}nitial \textbf{R}ank \textbf{O}rdering), a two-stage framework designed to optimize both ranking and regression performance for real-valued targets. \textsc{CAIRO} learns sequentially: Stage 1 minimizes a ranking loss to learn the optimal ordering, and Stage~2 minimizes a pointwise loss to recover the scale. This framework is theoretically grounded; we show that it targets the true conditional expectation at the population level, just as a standard MSE-based learner would.

Crucially, the separation of learning into distinct stages provides two major statistical advantages over standard unconstrained regression. First, because ranking losses are scale-invariant, the representation learned in Stage~1 is inherently shielded from the magnitude of outliers. This allows \textsc{CAIRO} to function as a highly robust regression method, offering stability in regimes with heavy-tailed or heteroskedastic noise where standard MSE minimization often fails. Second, the isotonic projection in Stage~2 mathematically guarantees that the final predictions are \emph{auto-calibrated} on the training data. Standard neural networks optimized for global MSE frequently suffer from local miscalibration, systematically over- or under-predicting across different segments of the output space. By structurally enforcing that the empirical mean of the targets exactly matches the predicted value within every constant piece (level set) of the fitted function, \textsc{CAIRO} ensures locally unbiased estimation.

The remainder of this paper is organized as follows. Section~\ref{sec:background} reviews related work on learning-to-rank and score recalibration. Section~\ref{sec:our_method} introduces the \textsc{CAIRO} framework and formalizes its two-stage population objective. Section~\ref{sec: ranking_phase} studies Stage~1, including weighted pairwise ranking losses, their pointwise equivalents, and the class of \emph{Optimal-in-Rank-Order} (ORO) objectives that guarantee order-consistency with the conditional mean. Section~\ref{sec:calibration} analyzes Stage~2 and proves that monotone calibration via isotonic regression recovers the true regression function when the Stage~1 objective is ORO. Section~\ref{sec:calib-empirical} details empirical estimation and practical algorithms, including smooth surrogates for efficient optimization. Finally, Section~\ref{sec:experiments} presents synthetic and real-data benchmarks demonstrating that \textsc{CAIRO} is competitive with strong tree-based baselines and is particularly robust under heavy-tailed and heteroskedastic noise.

\section{Background and Related Work}
\label{sec:background}

\paragraph{Learning to rank.}

Traditionally, the ranking problem considers two i.i.d.\ copies $(X,Y)$ and $(X',Y')$, and constructs a \textit{score} $g$ leading to a rank rule: $g(X)>g(X') \implies Y>Y'$. The score $g$ can be obtained by minimizing the following pairwise ranking loss \citep{clemenccon2008ranking}:
\begin{equation}
\label{eq:01-pop}
\mathbb{P}\!\left\{(Y-Y')\,\operatorname{sign}\!\big(g(X)-g(X')\big)\le 0\right\},
\end{equation}
Such objectives are inherently scale-free which is different from regression tasks where the goal is to estimate the conditional mean
$m^\star(x) \equiv \mathbb{E}[Y| X=x]$ by minimizing a pointwise loss such as $\mathbb{E}\big[(Y-m(X))^2\big]$.

The learning-to-rank literature, especially in information retrieval, has developed a large array of ranking algorithms to estimate the optimal score $g$ given data $\{(x_i, y_i)\}_{i=1}^n$. Pairwise methods such as RankSVM \citep{joachims2002optimizing} and RankNet \citep{burges2005learning} optimize smooth surrogates of pairwise misordering, typically logistic or hinge-style losses applied to score differences $g(x_i)-g(x_j)$, often augmented with heuristic importance weights that reflect ranking metrics like NDCG or MAP \citep{burges2010ranknet}.  
Listwise methods such as ListNet, instead construct losses at the level of entire ranked lists, sometimes via differentiable relaxations of the sorting operator \citep{cao2007learning,taylor2008SoftRank}.  
More recent work introduces faster permutation-based relaxations such as \emph{SoftSort}, which approximate ranks and sorted scores with smooth operators suitable for gradient-based training \citep{prillo2020softsort}. 

\paragraph{Recalibration and auto-calibration.}
Many learning algorithms produce scores whose magnitude is not directly interpretable as it is generally not on the same numerical scale as the quantity one ultimately wishes to predict. Calibration refers to the task of learning a transformation of these raw outputs so that they match a desired target scale. In classification, the primary objective is to transform unnormalized outputs into reliable probability estimates via parametric mappings like Platt scaling \citep{platt1999probabilistic, niculescu2005predicting} or non-parametric approaches like histogram binning \citep{zadrozny2001obtaining, naeini2015obtaining}. In continuous regression, the analogous concept is \emph{auto-calibration} \citep{kruger2021generic, gneiting2023regression, wuthrich2023model}, which requires the prediction to be unbiased conditional on itself, i.e., $\mathbb{E}[Y|\hat{m}(X)] = \hat{m}(X)$. Isotonic regression offers a flexible non-parametric solution to achieve this by finding the optimal non-decreasing function that minimizes the squared error between scores and targets \citep{zadrozny2002transforming}. Because the Pool Adjacent Violators (PAV) algorithm computes block-wise local averages of the response, isotonic regression structurally guarantees empirical auto-calibration without assuming a rigid parametric form \citep{barlow1972statistical,guntuboyina2018nonparametric}.
 
\paragraph{Ranking for classification/regression.}
 
Combining ranking and calibration methods is an idea that has primarily been applied to classification to improve the accuracy of the class probability estimate. \citet{sculley2010combined, yan2022scale} proposed a joint objective approach that employs a single, multi-objective loss function to simultaneously target ranking and calibration. However, this approach forces a trade-off between the ranking and calibration objectives without theoretical guarantee that the calibrated outcome recovers the true class probability. This problem was overcome by \citet{menon2012predicting}, who proposed a sequential approach, learning the ranking first and calibration second. While their methods has theoretical guarantee to recover the true class probability, their analysis was limited to linear scores and binary outcomes, restricting its reach in machine learning applications.

\citet{wuthrich2024isotonic} recently adapted the sequential approach of \citet{menon2012predicting} to regression with real-valued outcomes. While they analyze the calibration of scores that ostensibly preserve the ordering of the conditional mean, their work does not characterize the class of loss functions guaranteed to produce such scores. Consequently, they do not provide theoretical assurances that the final calibrated estimator converges to the true regression function.  Furthermore, their implementation employs the Gamma deviance loss to train first-stage scores, yet the theoretical link between this loss and optimal ranking is unaddressed. They also restrict their analysis to linear scoring models.


The philosophy of prioritizing order over scale has recently demonstrated significant success in mechanism design, specifically through the \emph{Isotonic Mechanism} proposed for peer review \citep{su2021you, yan2025isotonic, wu2025isotonic}. These works establish that utilizing a provided ground-truth ranking to calibrate noisy scores can dramatically reduce estimation error.  Our work extends this insight into the supervised learning setting, though with a fundamental distinction in the source of the ranking. While the Isotonic Mechanism relies on eliciting a ranking from a human agent (e.g., an author ranking their own papers), CAIRO must learn a ranking function directly from feature representations. We demonstrate that the robustness benefits observed in mechanism design hold true in deep regression, even when the ranking itself is a learned latent construct rather than an external oracle.

\paragraph{Semiparametric single-index models.}

From a statistical perspective, modeling the conditional mean as $\mathbb{E}[Y| X=x] = f(g(x))$, where $f$ is an unknown monotonic link function and $g$ is a scoring index, naturally connects our framework to semiparametric single-index models (SIMs) \citep{hardle1993optimal,hardle1989investigating,ichimura1993semiparametric}. Our approach shares  conceptual roots with the maximum rank correlation (MRC) estimator \citep{han1987non}, which recovers a linear index $g(x) = x^\top \beta$ by maximizing Kendall's $\tau$. \textsc{CAIRO} thus serves as a flexible, deep-learning-compatible generalization of classical index models. However, it significantly departs from standard SIM estimation strategies. While traditional SIMs rely on computationally heavy profile-likelihoods or bandwidth-sensitive kernel smoothing to jointly estimate $f$ and $g$, \textsc{CAIRO} decouples the problem. By learning a highly non-linear index via distribution-free ranking losses and subsequently recovering the shape-constrained link through efficient isotonic regression, we circumvent classical computational bottlenecks.

\paragraph{Our contributions.} 

In this paper, \begin{enumerate*}[label=(\roman*)] 
\item we propose \textsc{CAIRO}, a two-step regression framework that extends the rank-then-calibrate approach of \citet{menon2012predicting} to real-valued outcomes. To ensure that the two-stage approach recovers the true regression function at the population level, we formalize the requirement that we expect of the first stage ranking loss and group them under a single collection of \emph{Optimal-in-Rank-Order} (ORO) loss functions. 
\item We provide   theoretical insights into these ranking losses by explicitly connecting them to popular measures of association such as Spearman's $\rho$ and Kendall's $\tau$. 
\item We mathematically ensure that our final estimator is \emph{auto-calibrated} via Stage~2 isotonic projection, effectively turning uncalibrated neural network representations into reliable, locally unbiased statistical forecasters.
\item Finally, we demonstrate empirically that the decoupling of scale and order allows \textsc{CAIRO} to function as a highly robust alternative to MSE minimization, exhibiting significant resilience to heavy-tailed and heteroskedastic noise. \end{enumerate*}

\section{The Decoupled Regression Framework}
\label{sec:our_method}
Consider a two-stage population learning framework to construct a regression model.
In Stage~1, a model is trained using a ranking loss $\mathcal{L}_\textup{rank}$ over a function class $\mathcal{G}$ (that is assumed to enclose the class $\mathcal{G}'$ in Definition~\ref{def:ORO}): 
\begin{align}\label{eq:stage1}
    \text{Stage 1:}\qquad g^{\star} &\in \arg\min_{g \in \mathcal{G}} \mathcal{L}_\text{rank}(Y, g(X)).
\end{align}
The purpose of Stage~1 is to learn a strictly monotone transformation of the true regression function $m^{\star}$. Specifically, later, we will show that we expect the optimal score from Stage~1 to be of the form $g^{\star}(X) = h(m^{\star}(X))$ for some $h$ strictly increasing. In Stage~2, the output from Stage~1 is calibrated with isotonic regression to recover $m^{\star}$.  Specifically, we apply isotonic regression in Stage~2 which minimizes the squared loss $\mathcal{L}_\textup{iso}$ over the space of non-decreasing functions $\mathcal{F}$ while preserving the ranking learned in Stage~1. Formally,
\begin{align}\label{eq:stage2}
    \text{Stage 2:}\qquad f^{\star} &\in \arg\min_{f \in \mathcal{F}} \mathcal{L}_\text{iso}(Y, f(g^{\star}(X))).
\end{align}
The isotonic regression step in Stage~2 yields a model $f^{\star}$ that can be viewed as learning the inverse of the $h(\cdot)$ transformation in $g^{\star}(X) = h(m^{\star}(X))$, thereby recovering the true regression function. We will show that $f^{\star}(g^{\star}) = m^{\star}$.  Henceforth, we refer to this two-stage framework as the Calibrated After Initial Rank Ordering (CAIRO) framework. 


One caveat with Stage~1 optimization is that $g^{\star}$ is only identified up to a strictly increasing transformation. In classification tasks, this identification issue is absent because the scale of responses is fixed; however, in our regression framework, the Stage 2 optimizer $f^{\star}$ is consequently identified only up to the inverse of said transformation. Despite this individual ambiguity, the composition $f^{\star}(g^{\star}) = m^{\star}$ remains well-identified.  While one could fix the Stage~1 identification by requiring $g^{\star}(X)$ to follow a specific distribution (e.g., $\text{Unif}[0,1]$), this is often unnecessary in practice, since in the implementation of Stage~1 we will rely on smoothed versions of the rank loss.  For instance, in implementations utilizing the log-sigmoid function, $g^{\star}$ is identified up to an additive constant, allowing for a unique optimizer to be fixed by a simple constraint such as $g^{\star}(\mathbf{0})=0$.

\subsection{Stage 1: Optimal-in-Rank-Order Objectives}
\label{sec: ranking_phase}





In this section, we discuss Stage~1 of the decoupled regression framework (Stage~2 will be discussed in Section \ref{sec:calibration}). 
For two i.i.d. copies
$(X,Y)$ and $(X',Y')$ and a symmetric weighting function
$w(Y,Y')$, consider a generalization of the pairwise ranking loss in  \eqref{eq:01-pop}:
\begin{equation}
\label{weighted-01-pop}
\mathcal{L}
\;\equiv\;
\mathbb{E}\!\left[
  w(Y,Y')\,
  \mathbbm{1}\!\left\{
    (Y-Y')\,\operatorname{sign}\!\big(g(X)-g(X')\big) \le 0
  \right\}
\right].
\end{equation}
The pairwise structure of \eqref{weighted-01-pop} aligns with the fundamental objective of learning to rank: to construct a scoring function $g$ that recovers the relative order of the targets. 
Specifically, the loss uses an indicator function to penalize instances where the scores disagree with the true target order, thereby encouraging the condition  $g(X) > g(X') \;\Longrightarrow\; Y > Y'$ to hold with high probability.  Different choices of the weight function $w(Y,Y')$ leads to certain commonly used ranking losses: (i) the \emph{Uniform} weight $w(Y,Y') \equiv 1$ \citep{clemenccon2008ranking, burges2005learning};
(ii) the \emph{Absolute Gap} weight $w(Y,Y') \equiv |Y-Y'|$ \citep{agarwal2005stability, wang2018lambdaloss};
and (iii) the \emph{Rank Gap} weight $w(Y,Y') \equiv |F_Y(Y)-F_Y(Y')|$ where $F_Y$ stands for the distribution function of $Y$ \citep{menon2016bipartite}.
The introduction of these weights allows for a flexible ranking objective.

Additional insights can be obtained from Theorem~\ref{thm:equivalence} below, which provides the equivalent pointwise forms of loss \eqref{weighted-01-pop} with the specified weighting functions. This gives the theoretical basis for using the pointwise ranking loss in Stage~1 in \eqref{eq:stage1}. The proof of Theorem 1 is provided in Appendix \ref{proof:equivalence_of_losses}

\begin{theorem}
\label{thm:equivalence}
Let $\mathcal{L}$ be the loss defined in \eqref{weighted-01-pop}.  Then we have the following equivalent representations $\mathcal{L}_{\textup{uni}}$, $\mathcal{L}_{\textup{abs}}$, and $\mathcal{L}_{\textup{cdf}}$ based on three different choices of the weight $w(Y,Y')$:

\begin{enumerate}
    \item \textbf{Kendall's $\tau$:}
    \label{Kendall}
    If $w(Y,Y') = 1$, minimizing $\mathcal{L}$ with respect to $g$ is equivalent to minimizing $\mathcal{L}_{\textup{uni}}$ involving Kendall's rank correlation coefficient $\tau \equiv \mathbb{E}[\operatorname{sign}(Y-Y')\,\operatorname{sign}(g(X)-g(X'))]$ between $Y$ and $g(X)$:
    \begin{equation}
        \mathcal{L}_{\textup{uni}}(Y, g(X)) = \frac{1}{2} ( 1 - \tau(Y, g(X)) ).
    \end{equation}

    \item \textbf{Gini covariance:}
    \label{Gini}
    If $w(Y,Y') = |Y - Y'|$, minimizing $\mathcal{L}$ with respect to $g$ is equivalent to minimizing $\mathcal{L}_{\textup{abs}}$ involving the Gini covariance between $Y$ and the rank of $g(X)$:
    \begin{equation}
        \mathcal{L}_{\textup{abs}}(Y, g(X)) = c - 2\operatorname{Cov}(Y, F_{g(X)}(g(X))).
    \label{eq: loss_abs}
    \end{equation}
    Here $c=\frac{1}{2}\mathbb{E}|Y-Y'|$, which does not affect optimization of $\mathcal{L}_{\textup{abs}}$ with respect to $g$. 
    \item \textbf{Spearman's $\rho$:}
    \label{Spearman}
    If $w(Y,Y') = |F_Y(Y) - F_Y(Y')|$, minimizing $\mathcal{L}$ with respect to $g$ is equivalent to minimizing $\mathcal{L}_{\textup{cdf}}$ involving Spearman's rank correlation $\rho_S \equiv 12\cdot\mathrm{Cov}(F_Y(Y),F_g(g(X)))$ between $Y$ and $g(X)$:
    \begin{equation}
        \mathcal{L}_{\textup{cdf}}(Y, g(X)) = \frac{1}{6} ( 1 - \rho_S(Y, g(X)) ).
    \end{equation}
\end{enumerate}


\end{theorem}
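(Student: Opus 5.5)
Throughout, write $S=\operatorname{sign}(Y-Y')$ and $T=\operatorname{sign}(g(X)-g(X'))$, and assume continuity (no ties in $Y$ or in $g(X)$ almost surely) so that $S,T\in\{-1,1\}$ a.s. Under this convention the misordering indicator is
\[
\mathbbm{1}\{(Y-Y')T\le 0\}=\mathbbm{1}\{ST=-1\}=\tfrac12(1-ST).
\]
With ties one adds a term that does not depend on $g$, or fixes a tie-breaking convention; this is the sense of ``equivalent to minimizing'' in the statement. Every case then reduces to
\[
\mathcal{L}=\tfrac12\mathbb{E}[w(Y,Y')]-\tfrac12\mathbb{E}\!\left[w(Y,Y')\,S\,T\right],
\]
and the plan is to identify the second expectation for each of the three weights.

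\textbf{Kendall ($w\equiv1$).} This case is immediate: $\mathcal{L}=\tfrac12(1-\mathbb{E}[ST])=\tfrac12(1-\tau)$.

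\textbf{Gini ($w=|Y-Y'|$).} Here $wS=Y-Y'$, so $\mathbb{E}[wST]=\mathbb{E}[(Y-Y')T]$. By exchangeability this equals $2\mathbb{E}[Y\,T]$. Conditioning on $(X,Y)$ gives $\mathbb{E}[T\mid X,Y]=2F_{g(X)}(g(X))-1$, using continuity of $g(X)$. Hence
\[
\mathbb{E}[wST]=2\mathbb{E}\!\left[Y(2U-1)\right]=4\operatorname{Cov}(Y,U),
\]
where $U=F_{g(X)}(g(X))$ and we used $\mathbb{E}U=1/2$. Substituting gives $\mathcal{L}=\tfrac12\mathbb{E}|Y-Y'|-2\operatorname{Cov}(Y,U)$, which is the stated form with $c$ as claimed.

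\textbf{Spearman ($w=|F_Y(Y)-F_Y(Y')|$).} For continuous $Y$, $F_Y$ is strictly increasing on the support, so $\operatorname{sign}(F_Y(Y)-F_Y(Y'))=S$ a.s. Then $wS=V-V'$ with $V=F_Y(Y)\sim\mathrm{Unif}[0,1]$. Repeating the Gini computation with $Y$ replaced by $V$ gives $\mathbb{E}[wST]=4\operatorname{Cov}(V,U)=\rho_S/3$. For the constant term, $\mathbb{E}|V-V'|=1/3$. Therefore
\[
\mathcal{L}=\tfrac16-\tfrac16\rho_S=\tfrac16(1-\rho_S).
\]

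\textbf{Main obstacle.} The one step needing care is the conditional-expectation identity $\mathbb{E}[T\mid X]=2F_{g(X)}(g(X))-1$, together with the tie handling that underlies it. If $g(X)$ has atoms, this identity only holds with a mid-rank convention, namely $F(t)-\tfrac12\mathbb{P}(g(X)=t)$, and ties in $Y$ contribute a $g$-independent or $g$-dependent correction. I would state the continuity assumption explicitly, or adopt mid-ranks, and note that these corrections do not alter the minimizers over the class considered.
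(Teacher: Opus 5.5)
Your proposal is correct and follows essentially the paper's argument. You rewrite the misordering indicator as $\tfrac12(1-ST)$ under the same no-ties assumption the paper makes, use $|t|\operatorname{sign}(t)=t$ together with exchangeability and conditioning on $(X,Y)$ to get $\mathbb{E}[(Y-Y')T]=4\operatorname{Cov}(Y,F_{g(X)}(g(X)))$, and then rerun this with $F_Y(Y)$ in place of $Y$, using $\mathbb{E}|V-V'|=1/3$, for the Spearman case. The only cosmetic difference is that you condition the sign $T$ directly (the paper's second lemma), whereas the paper splits $T$ into indicators and invokes its Proposition~\ref{equivalent_pairwise_form}.
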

Theorem~\ref{thm:equivalence} provides substantial interpretability for these weighted ranking losses by connecting them to classical statistical measures. (i) For $w=1$, as established in Theorem~\ref{thm:equivalence}, this loss acts as a direct proxy for Kendall's $\tau$. Thus, penalizing every misordered pair equally is mathematically equivalent to maximizing the probability of concordance between the target $Y$ and the model score $g(X)$. (ii) For $w=|Y-Y'|$, Theorem~\ref{thm:equivalence} reveals that this loss is the negative of the Gini covariance (up to a constant). It follows that assigning higher penalties to pairs distant in the target space is equivalent to maximizing the association between the response $Y$ and the score's cumulative distribution $F_{g(X)}(g(X))$. The Gini covariance is a standard tool in actuarial science for quantifying the relationship between risk scores and insurance losses and is extensively studied by \citet{yitzhaki2012more}. (iii) For $w = |F_Y(Y) - F_Y(Y')|$, Theorem~\ref{thm:equivalence} confirms that the loss is a linear transformation of Spearman's rank correlation $\rho_S$: hence, prioritizing pairs that are far apart in quantile space is equivalent to maximizing the correlation between the ranks of $Y$ and $g(X)$. This variant offers a compromise; it captures the relative distribution of the targets while remaining invariant to extreme outliers in $Y$ that might otherwise destabilize the Gini objective.

We now characterize the family of ranking losses suitable for the first stage of the CAIRO framework. We focus specifically on the subclass of losses that recover the correct ordering of the conditional mean as their optimal solution. To formalize this requirement, we define the following property:

\begin{definition}
\label{def:ORO}
  A loss function $\mathcal{L}$ is Optimal in Rank Order (ORO) if any of its associated population minimizer $g^{\star}$ from \eqref{eq:stage1} belongs to $\mathcal{G}'$ which is defined as:
  \[
    \mathcal{G}'\equiv\big\{h( m^{\star}):\ h:\mathbb{R}\to\mathbb{R}\text{ strictly increasing}\big\} \subset \mathcal{G}.
  \]
\end{definition}
This definition is central to the CAIRO framework because the success of calibration in Stage~2 rests entirely on the ordering established in the ranking stage. Since calibration applies monotonic transformation, it can only adjust the scale of predictions from $g^\star$ while strictly preserving their order. Consequently, if the Stage~1 objective fails to satisfy the ORO property, the resulting score $g^\star$ will not align with the ranking of the conditional mean $m^\star$; in such cases, no amount of recalibration can recover the true regression function $m^\star$. By restricting our attention to ORO losses, we ensure that the decoupled framework remains theoretically consistent with standard regression objectives.

The following theorem shows that the three losses defined in Theorem~\ref{thm:equivalence} achieve the ORO property under specific conditions: 

\begin{theorem}
\label{thm:oro_all_three}
Let $(X,Y)$ and $(X',Y')$ be i.i.d. The weighted pairwise ranking losses defined in Theorem~\ref{thm:equivalence} are ORO under the following conditions:

\begin{enumerate}
    
    \item \textbf{Kendall's $\tau$:} 
    $\mathcal{L}_{\textup{uni}}$  is ORO if the distribution of $Y$ is pairwise consistent with  $m^\star$:
    \[
    m^\star(x)>m^\star(x') \quad\Longleftrightarrow\quad \mathbb{P}(Y>Y'|X=x,\,X'=x') > 1/2 .
    \]

    \item \textbf{Gini covariance:} 
    Provided $\mathbb{E}[|Y|] < \infty$, $\mathcal{L}_{\textup{abs}}$ is ORO. 

    \item \textbf{Spearman's $\rho$:} 
    $\mathcal{L}_{\textup{cdf}}$  is ORO if the Additive-Noise Model holds:
    \[
    Y=m^\star(X)+\varepsilon,
    \]
    where $\varepsilon \perp X$ and $\varepsilon$ has a continuous, strictly increasing CDF.
\end{enumerate}
\end{theorem}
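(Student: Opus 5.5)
The common strategy is to condition on the features. For every weight $w$ in Theorem~\ref{thm:equivalence}, write the pairwise loss as
\[
\mathcal{L}(g)=\mathbb{E}\big[\Phi(X,X')\,\mathbbm{1}\{g(X)>g(X')\}+\Phi(X',X)\,\mathbbm{1}\{g(X)<g(X')\}+\text{(tie terms)}\big],
\]
where
\[
\Phi(x,x')\equiv\mathbb{E}\big[w(Y,Y')\mathbbm{1}\{Y\le Y'\}\mid X=x,X'=x'\big]
\]
is the conditional cost of ranking $x$ above $x'$. Write $\Delta(x,x')\equiv\Phi(x',x)-\Phi(x,x')$. The core step is a \emph{pairwise sign lemma}: $\Delta(x,x')>0$ if and only if $m^\star(x)>m^\star(x')$.

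Given the lemma, ORO follows by a swapping argument. Any $g=h(m^\star)$ with $h$ strictly increasing orders every pair with $\Delta\neq0$ the cheaper way, so it achieves the pointwise lower bound $\mathbb{E}\min\{\Phi(X,X'),\Phi(X',X)\}$ plus the unavoidable tie cost on $\{m^\star(X)=m^\star(X')\}$. Conversely, suppose a minimizer $g^\star$ misorders a set of pairs of positive $P\otimes P$ measure with $m^\star(x)\ne m^\star(x')$, or ties such pairs. Then it pays strictly more than the lower bound, contradicting optimality. So $g^\star$ must be strictly order-preserving $P_X$-a.s. To also rule out $g^\star$ separating points where $m^\star$ ties, I use that the indicator with "$\le 0$" charges a tie in $Y$-order. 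The conclusion is that $g^\star$ is a.s.\ a strictly increasing function of $m^\star$. Passing from "a.s.\ order-consistent on pairs" to a representation $g^\star=h(m^\star)$ needs a measure-theoretic step: define $h$ on the range of $m^\star$ by $h(t)=g^\star(x)$ for any $x$ with $m^\star(x)=t$, show it is well defined off a null set, and extend it monotonically. This is routine but must be written carefully.

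The sign lemma is verified case by case.

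\textbf{Kendall.} Here $w=1$, so $\Delta(x,x')=\mathbb{P}(Y>Y'\mid x,x')-\mathbb{P}(Y<Y'\mid x,x')$, up to ties in $Y$. The assumed pairwise consistency gives exactly the required sign, and essentially nothing more is needed.

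\textbf{Gini.} Here $w=|Y-Y'|$, so
\[
\Delta(x,x')=\mathbb{E}\big[(Y-Y')_+-(Y'-Y)_+\mid x,x'\big]=m^\star(x)-m^\star(x'),
\]
using independence of the copies and $\mathbb{E}|Y|<\infty$. The sign lemma holds with no further assumption.

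\textbf{Spearman.} Here $w=|F_Y(Y)-F_Y(Y')|$, so $\Delta(x,x')=\mathbb{E}[F_Y(Y)\mid x]-\mathbb{E}[F_Y(Y)\mid x']$. Under the additive model, $\mathbb{E}[F_Y(Y)\mid X=x]=\mathbb{E}[F_Y(m^\star(x)+\varepsilon)]\equiv\psi(m^\star(x))$. The map $\psi$ is strictly increasing because $F_Y$ is strictly increasing: $Y$ has a continuous, strictly increasing CDF, being a convolution with $\varepsilon$.

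The main obstacle is the rigor in the general reduction, not the per-case identities. The delicate points are the handling of ties in $g$, the atoms of $Y$ in the Kendall case, and the extraction of $h$ from a.s.\ pairwise order-consistency. I would first treat the case where $m^\star(X)$ is continuously distributed and then argue that ties only add cost.
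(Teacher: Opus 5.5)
Your skeleton is the paper's own. Your $\Delta(x,x')$ is exactly its $S_w(x,x')=\mathbb{E}[w(Y,Y')\operatorname{sign}(Y-Y')\mid X=x,X'=x']$, and your three case computations coincide with the appendix. The genuine problem is the tie handling you add on top.

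Your claim that the ``$\le 0$'' convention rules out $g^\star$ separating points with $m^\star(x)=m^\star(x')$ is backwards. On such a pair your own lemma gives $\Delta=0$, so both strict orders cost $\Phi(x,x')$; the charge $\mathbb{E}[w\,\mathbbm{1}\{Y=Y'\}\mid x,x']$ appears in both. A tie $g(x)=g(x')$ instead costs $\mathbb{E}[w\mid x,x']=\Phi(x,x')+\mathbb{E}[w\,\mathbbm{1}\{Y>Y'\}\mid x,x']$. The convention therefore penalizes merging, not separating. The tie cost on $\{m^\star(X)=m^\star(X')\}$ is not ``unavoidable'', and your planned extension (``ties only add cost'') proves the opposite of what you need.

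In fact the statement fails when $m^\star(X)$ has atoms. Take $X\sim\mathrm{Unif}[0,1]$ and $Y=\varepsilon\sim N(0,1)$ independent of $X$. All three hypotheses hold, and $m^\star\equiv 0$, so $\mathcal{G}'$ is the set of constants. A constant has Gini loss $\mathbb{E}|Y-Y'|$, while $g(x)=x$ has loss $\tfrac12\mathbb{E}|Y-Y'|$. The analogous values are $1$ versus $\tfrac12$ for Kendall and $\tfrac13$ versus $\tfrac16$ for Spearman. Over any class containing $g(x)=x$, no minimizer lies in $\mathcal{G}'$.

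What rescues the theorem is the paper's standing assumption in the appendix, $\mathbb{P}(g(X)=g(X'))=0$, together with no ties in $Y$ for the Kendall case. Members of $\mathcal{G}'$ must satisfy it, which forces $m^\star(X)$ to be atomless. The level-set pairs are then $P\otimes P$-null and need no argument. Under that assumption your proof is correct and more careful than the paper's one-line ``which implies $g^\star=h(m^\star)$''.

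A clean way to carry out your extraction step: once $g^\star$ strictly orders a.e.\ pair as $m^\star$ does, Fubini gives $F_{g^\star}(g^\star(x))=F_{m^\star}(m^\star(x))$ for a.e.\ $x$. Hence, with $F^{-1}$ denoting quantile functions, $m^\star=(F_{m^\star}^{-1}\circ F_{g^\star})(g^\star)$ a.s.

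Be aware that ``extend it monotonically'' cannot always give an $h$ that is strictly increasing on all of $\mathbb{R}$. Suppose $m^\star(X)$ is uniform on $[0,1]\cup[2,3]$. Then $g^\star=\phi(m^\star)$, with $\phi(t)=t$ on $[0,1]$ and $\phi(t)=t-1$ on $[2,3]$, is a minimizer. Yet it equals $h(m^\star)$ a.s.\ for no strictly increasing $h$. So $\mathcal{G}'$ must be read modulo null sets, with $h$ strictly increasing on the support of $m^\star(X)$. That is still enough for the Stage~2 argument of Theorem~\ref{composition_is_f_star_statement}, since $m^\star$ is then a nondecreasing function of $g^\star$ a.s.
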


\begin{remark}
    The results in Theorem~\ref{thm:oro_all_three} illustrate how different ranking objectives rely on varying structural assumptions to achieve the ORO property. (i) As can be seen in the proof of Theorem \ref{thm:oro_all_three} in Appendix \ref{proof_of_oro}, the Gini covariance objective is governed directly by the difference in conditional means $m^\star(x) - m^\star(x')$; consequently, it satisfies the ORO property under very mild assumptions. (ii) The Kendall objective targets the probability of concordance, requiring only that the likelihood of $Y > Y'$ remains consistent with the ordering of the means. (iii) In contrast, the Spearman objective operates in quantile space; because the ordering of expected quantiles does not always coincide with the ordering of expectations, this objective requires the stronger structural assumption of additive noise to guarantee the ORO property.
\end{remark}

\subsection{Stage 2: Calibration via Isotonic Regression}
\label{sec:calibration}

In Stage~2 we restore the scale of the regression function.  
Recall that $m^\star(x)=\mathbb{E}[Y|X=x]$ and let $g^\star$ denote Stage~1 score.  
Under any ORO loss, an optimal population ranker satisfies $g^\star(x) = h( m^\star(x))$ for some strictly increasing $h:\mathbb R\to\mathbb R.$ The goal of Stage~2 is to learn a monotone map that inverts the unknown transform $h$ and recovers $m^\star$.


Specifically, the Stage~2 calibration problem seeks a nondecreasing function that best aligns these scores with the outcome $Y$.  
Let $\mathcal F \equiv \{ f : \mathbb{R} \to \mathbb{R} \ \text{nondecreasing} \}.$ We define the population version of the objective:
\begin{equation}
\label{eq:pop_iso_again}
f^\star \in \arg\min_{f\in\mathcal F} \ \mathbb{E}\big[(Y - f(g^\star(X)))^2\big].
\end{equation}
When composed with $g^\star$, the optimizer $f^\star$ of \eqref{eq:pop_iso_again} recovers $m^\star$, so calibration recovers the correct regression function. The following theorem formalizes this statement; the proof is given in Appendix~\ref{composition_is_f_star}.

\begin{theorem}
\label{composition_is_f_star_statement}
Assume $\mathbb{E}[Y^2] < \infty$ and suppose that the Stage~1 score satisfies $g^{\star}(x) = h(m^\star(x))$ for some strictly increasing $h:\mathbb R\to\mathbb R$. In Stage 2, let $f^{\star}$ be any minimizer of the monotone least-squares problem in \eqref{eq:pop_iso_again}.  
Then it recovers the true regression function
\[
f^{\star}\big(g^{\star}(X)\big) = m^\star(X) .
\]
\end{theorem}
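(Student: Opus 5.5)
The plan is to show that $f_0 \equiv m^\star\circ h^{-1}$ is itself a feasible nondecreasing function attaining the smallest possible risk among \emph{all} measurable functions of $g^\star(X)$. Any minimizer over $\mathcal F$ must then coincide with it on the support of $g^\star(X)$.

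First, define $f_0$ on $h(\mathbb{R})$ by $f_0(t)=h^{-1}(t)$. Since $h$ is strictly increasing, $h^{-1}$ exists on the range and is increasing, so $f_0$ is nondecreasing there. I will extend it to all of $\mathbb{R}$ as a nondecreasing function, for instance $f_0(t)=\sup\{s: h(s)\le t\}$, clipped appropriately at the ends. A more careful option is to work only with values on the range of $g^\star(X)$, because the objective depends on $f$ only through $f(g^\star(X))$. By construction $f_0(g^\star(X))=h^{-1}(h(m^\star(X)))=m^\star(X)$, so $f_0\in\mathcal F$ is feasible.

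Second, apply the standard Pythagorean decomposition, which uses $\mathbb{E}[Y^2]<\infty$ so that everything lies in $L^2$. For any $f\in\mathcal F$ with $f(g^\star(X))\in L^2$, the variable $f(g^\star(X))$ is $\sigma(X)$-measurable. Hence
\[
\mathbb{E}\big[(Y-f(g^\star(X)))^2\big]=\mathbb{E}\big[(Y-m^\star(X))^2\big]+\mathbb{E}\big[(m^\star(X)-f(g^\star(X)))^2\big],
\]
because the cross term vanishes by the tower property. (If $f(g^\star(X))\notin L^2$, the risk is infinite and $f$ cannot be a minimizer, since $f_0$ gives finite risk.) The minimal value over $\mathcal F$ is therefore $\mathbb{E}[(Y-m^\star(X))^2]$, and it is attained by $f_0$.

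Third, take any minimizer $f^\star$. Its risk equals that of $f_0$, so $\mathbb{E}[(m^\star(X)-f^\star(g^\star(X)))^2]=0$, which gives $f^\star(g^\star(X))=m^\star(X)$ almost surely. This is the sense in which the equality in the theorem holds.

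The main obstacle is the technical construction of a genuinely nondecreasing extension $f_0$ on all of $\mathbb{R}$. The range of $h$ may have gaps, and $h^{-1}$ may be unbounded near the edges of the range. I would handle this by defining $f_0(t)=\sup\{s\in\mathbb{R}: h(s)\le t\}$ where this set is nonempty, and $f_0(t)=\inf_s s$ (or a constant below the essential infimum of $m^\star$) otherwise. I would then check monotonicity directly. If this becomes messy, I would instead note that only values on the support of $g^\star(X)$ matter, and that any nondecreasing function defined on a subset of $\mathbb{R}$ extends to a nondecreasing function on $\mathbb{R}$ via piecewise-constant left-limit filling. The remaining steps are routine.
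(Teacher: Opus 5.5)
Your route is the paper's: the tower-property decomposition $\mathbb{E}[(Y-f(g^\star(X)))^2]=\mathbb{E}[(Y-m^\star(X))^2]+\mathbb{E}[(m^\star(X)-f(g^\star(X)))^2]$, followed by exhibiting $h^{-1}$ as an element of $\mathcal F$ with zero excess risk. Your almost-sure conclusion is more accurate than the paper's ``$f^\star=h^{-1}$'', since a minimizer is determined only on the distribution of $g^\star(X)$. Your opening line should read $f_0\equiv h^{-1}$, not $m^\star\circ h^{-1}$.

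\textbf{The gap.} It sits in the step you flagged, and neither of your fixes closes it. The claim that every nondecreasing function on a subset of $\mathbb{R}$ extends to a real-valued nondecreasing function on $\mathbb{R}$ is false: $\tan$ on $(-\pi/2,\pi/2)$ has no such extension. Concretely, suppose $H=\sup h<\infty$ and $m^\star(X)$ is essentially unbounded above. Then \emph{no} $f\in\mathcal F$ satisfies $f(g^\star(X))=m^\star(X)$ a.s., because $g^\star(X)<H$ forces $f(g^\star(X))\le f(H)<\infty$. Your fixes all break in this regime:
\begin{itemize}
\item your $\sup\{s:h(s)\le t\}$ equals $+\infty$ on $[H,\infty)$;
\item any finite clipping changes $f_0(g^\star(X))$ on an event of positive probability;
\item a ``constant below the essential infimum of $m^\star$'' need not exist.
\end{itemize}
This case is not exotic. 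The paper's own suggested normalization $g^\star(X)\sim\mathrm{Unif}[0,1]$, with Gaussian $m^\star(X)$, makes $h$ the CDF of $m^\star(X)$ and $h^{-1}$ a quantile function unbounded at both ends. So ``the minimal value is attained by $f_0$'' is false in general. The paper's proof has the same hole when it asserts $h^{-1}\in\mathcal F$.

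\textbf{The repair.} Drop attainment and prove only that $\inf_{f\in\mathcal F}\mathbb{E}[(Y-f(g^\star(X)))^2]=\mathbb{E}[(Y-m^\star(X))^2]$.
\begin{itemize}
\item Put $\phi(t)=\sup\{s:h(s)\le t\}\in[-\infty,\infty]$, with $\sup\emptyset=-\infty$. It is nondecreasing and satisfies $\phi(h(s))=s$.
\item Set $f_n=\max\{-n,\min\{n,\phi\}\}\in\mathcal F$. Then $f_n(g^\star(X))$ is $m^\star(X)$ clipped to $[-n,n]$.
\item By your decomposition, the excess risk of $f_n$ is $\mathbb{E}[(|m^\star(X)|-n)_+^2]$. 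This tends to $0$ by dominated convergence, since $\mathbb{E}[m^\star(X)^2]\le\mathbb{E}[Y^2]<\infty$.
\end{itemize}
Your third step then goes through verbatim for any minimizer $f^\star$, giving $f^\star(g^\star(X))=m^\star(X)$ a.s. In the bounded-$h$, unbounded-$m^\star$ regime, the argument above shows no minimizer exists, so the theorem holds vacuously there. If $h(\mathbb{R})$ is unbounded in both directions, $\phi$ is already real-valued, and your original $f_0=\phi$ works without clipping.
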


\begin{remark}[Connection to auto-calibration and proper scoring]
\label{rmk:auto_calibration}
A prediction function $m(X)$ is strictly auto-calibrated if it satisfies $m(X) = \mathbb{E}[Y | m(X)]$ \citep{kruger2021generic, gneiting2023regression}. The true conditional mean $m^\star(X) = \mathbb{E}[Y|X]$ naturally satisfies this property:
$
\mathbb{E}[Y | m^\star(X)] = \mathbb{E}[\mathbb{E}[Y | X] \mid m^\star(X)] = m^\star(X).
$
While the Stage~1 ranking objectives (such as the Gini covariance) excel at learning robust, scale-invariant representations, they are fundamentally improper scoring rules when evaluated over unconstrained function classes; their raw outputs $g^\star(X)$ do not correspond to the true conditional expectation and are therefore not auto-calibrated.  Stage~2 resolves this  limitation by solving the monotone least-squares problem, isotonic regression explicitly estimates the conditional expectation $\mathbb{E}[Y | g^\star(X)]$ while preserving the order learned in Stage~1. As established in Theorem~\ref{composition_is_f_star_statement}, composing the uncalibrated ranker with this monotonic projection recovers $m^\star(X)$. 

Consequently, \textsc{CAIRO} mathematically guarantees that the final predictions are auto-calibrated, effectively converting an improper ranking mechanism into a statistically rigorous regression estimator. Crucially, because isotonic regression estimates conditional expectations via local block-averaging, this auto-calibration property holds not just asymptotically, but strictly on the empirical training data, even when the Stage~1 ranker is estimated with finite-sample noise.
\end{remark}



\section{Estimation and Algorithms}

We implement the \textsc{CAIRO} framework by approximating the population losses from Section \ref{sec:our_method} using an i.i.d. sample $\{(x_i, y_i)\}_{i=1}^n$. Since these objectives are non-differentiable, we derive empirical analogs in both pairwise and pointwise forms and apply smooth surrogates to enable gradient-based optimization. The second stage solves the empirical isotonic regression problem. This section formalizes these smoothing strategies and analyzes the resulting computational complexity.

\paragraph{Pairwise empirical losses (Stage 1).} 
We first consider the empirical version of the pairwise losses for Stage~1. Assuming no ties in the targets $y_i$'s, and symmetric weights $w_{ij}=w_{ji}\ge 0$, we define the empirical version of \eqref{weighted-01-pop}:
\begin{equation}
\label{eq:emp-oroloss}
    \ell(y, g(x))
    \equiv
    \frac{1}{n(n-1)}\sum_{1\le i<j\le n}
    w_{ij}\,
    \mathbbm{1}\!\left\{(y_i-y_j)\big(g(x_i)-g(x_j)\big)\le 0\right\}.
\end{equation}
This template unifies our Stage~1 objectives at the empirical level through the choice of $w_{ij}$. Specifically, we utilize three weights: (i) the uniform weights $w_{\text{uni}} \equiv 1$, (ii) the absolute gap weights $w_{\text{abs}} \equiv |y_i - y_j|$, and (iii) the rank gap weights $w_{\text{cdf}} \equiv |\hat{F}_n(y_i) - \hat{F}_n(y_j)|$, where $\hat{F}_n$ is the empirical CDF of $Y$. For computational purposes, we consider an equivalent form of \eqref{eq:emp-oroloss}  by isolating the misordered pairs (see proof in Appendix~\ref{app:emp-oroloss-eq})
\begin{equation}
\label{eq:emp-oroloss-ordered}
    \ell(y, g(x))
    \;=\;
    \frac{1}{n(n-1)}\sum_{i\ne j}
    w_{ij}\,
    \mathbbm{1}\{y_i>y_j\}\,\mathbbm{1}\{g(x_i)<g(x_j)\},
\end{equation}
The hard indicator function in \eqref{eq:emp-oroloss-ordered} results in a non-differentiable objective, so we replace it with a smooth surrogate. Specifically, observe that the indicator function respects the following upper bound:


\begin{equation}
\label{eq:weighted-logistic-bound}
\mathbbm{1}\{g(x_i)<g(x_j)\}
\;\le\;
\log\!\left(1+e^{-\sigma\,(g(x_i)-g(x_j))}\right) .
\end{equation}
We refer to the right hand side of \eqref{eq:weighted-logistic-bound} as the log-sigmoid function (that is evaluated at $g(x_i)-g(x_j)$), where $\sigma$ is a hyperparameter that characterizes the shape of the sigmoid. The log-sigmoid function has been extensively used as a smooth, convex surrogate for the indicator function in recent literature since an upper bound on \eqref{eq:emp-oroloss-ordered} can be derived using \eqref{eq:weighted-logistic-bound}: 
\begin{align}
\label{eq:weighted-ranknet}
\ell(y, g(x))
\;\le\;
\frac{1}{n(n-1)}\sum_{i\ne j}
w_{ij}\mathbbm{1}\{y_i>y_j\}\,
\log\!\left(1+e^{-\sigma\,(g(x_i)-g(x_j))}\right).
\end{align}
In practice, we would use the right hand side of \eqref{eq:weighted-ranknet} as the loss function in the implementation. As an upper bound it naturally decreases the true empirical loss function \eqref{eq:emp-oroloss-ordered}.

We can see that the RankNet loss is \eqref{eq:weighted-ranknet} by setting $w_{ij}= w_{\textup{uni}} = 1$. This result shows that the RankNet loss can be viewed either as   cross-entropy between the true probability and a sigmoid-modeled posterior probability \citep{burges2005learning}, or as a convex upper bound on the pairwise error,  corresponding to two alternative interpretations of RankNet:  either a probabilistic pairwise classifier or a smooth surrogate for improving Kendall's $\tau$, respectively. This fact supports our ORO property by showing that it includes standard, widely used ranking losses.

\paragraph{Pointwise empirical losses (Stage 1).}  By Theorem~\ref{thm:equivalence}, the Stage~1 losses based on the Gini covariance and Spearman's $\rho$ also have their pointwise versions in terms of the relevant covariances.  Now we consider the empirical version of these pointwise losses.  We construct these losses by replacing the population CDF with its sample counterpart, which can then be written using the \emph{rank} function in \eqref{eq:emp_cdf_is_rank}. Because the \emph{rank} operator is piecewise constant with vanishing gradients, we employ a differentiable \emph{softrank} approximation \citep{blondel2020fast, cuturi2019differentiable} to enable gradient-based optimization.
We illustrate this approach with the Gini covariance loss $\mathcal{L}_{\textup{abs}}$. Consider the empirical version: $\ell_{\textup{abs}}(y, g(x)) = -\,{\widehat{\mathrm{Cov}}}(y_i,\widehat F_{g(X)}(g(x_i))\big)$. We show in Appendix \ref{gini_form} that  $\ell_{\textup{abs}}$ may be written equivalently, up to additive and positive multiplicative (scalar) constants, as:
\begin{equation}
\ell_{\textup{abs}}(y, g(x))
\;\equiv\;
-\sum_{i=1}^n y_i\,\operatorname{rank}\!\big(g(x_i)\big).
\label{eq:rank_form_of_gini_no_constants}
\end{equation}

We can smooth \eqref{eq:rank_form_of_gini_no_constants} by replacing the \emph{rank} function with \emph{softrank} \citep{blondel2020fast, cuturi2019differentiable}, yielding the smooth objective  $\ell_\textup{soft-abs}$ defined as:
\begin{equation}
\label{eq: smooth_gini_loss}
        \ell_\textup{soft-abs}(y, g(x))
    \equiv\;-\sum_{i=1}^n y_i\,\operatorname{softrank}\!\big(g(x_i)\big), 
\end{equation}

\begin{remark*}
    \eqref{eq:rank_form_of_gini_no_constants} has been studied by \citet{frees2011summarizing} and \citet{yitzhaki2012more} as a metric for evaluating insurance risk scores and measures of association. \citet{wang2018lambdaloss} used \eqref{eq:rank_form_of_gini_no_constants} to derive \eqref{eq:emp-oroloss-ordered} with $w_{ij} = |y_i-y_j|$ which they implemented with log-sigmoid smoothing. To our knowledge, this work is the first to analyze \eqref{eq: smooth_gini_loss} as a differentiable pointwise loss function implemented via smooth rank approximations.
\end{remark*}

\label{sec:calib-empirical}

\paragraph{Empirical Calibration (Stage 2).}
Having learned a scoring rule $\hat{g}(x_i)$ from a Stage~1 loss for data $\{(x_i,y_i)\}_{i=1}^n$, we now restore the target scale via monotone calibration by solving the empirical analogue of~\eqref{eq:pop_iso_again} to the dataset $\{(\hat{g}(x_i),y_i)\}_{i=1}^n$:
\begin{equation}
\label{eq:emp_iso}
\min_{f \in \mathcal{F}}\ \frac{1}{n}\sum_{i=1}^n \big(y_i- f(\hat{g}(x_i))\big)^2
\quad \text{s.t.}\quad f(\hat{g}(x_i))\le  f(\hat{g}(x_j))\ \text{whenever } \hat{g}(x_i)\le \hat{g}(x_j).
\end{equation}
Let $\tilde{f}$ denote the resulting solution to \eqref{eq:emp_iso}. Because the Pool Adjacent Violators (PAV) algorithm fits piecewise-constant steps by taking the exact empirical average of the responses $y_i$ within each block, the resulting predictor $\hat{m}(x_i) = \tilde{f}(\hat{g}(x_i))$ is guaranteed to be strictly empirically auto-calibrated on the training set. This provides a robust statistical safety net: even if the finite-sample estimate $\hat{g}$ deviates from the true optimal ordering, the final predictions remain locally unbiased with respect to the learned representation. To further arrive at a continuous mapping $\hat{f}$ for out-of-sample prediction, we utilize the implementation from \citet{pedregosa2011scikit}, which linearly interpolates between the original fitted values, the $\tilde{f}(\hat{g}(x_i))$'s, and clips predictions outside the training range.

\paragraph{Computational considerations.}
The computational complexity of \textsc{CAIRO} is governed by the choice of empirical loss in Stage~1. When optimizing pairwise objectives \eqref{eq:weighted-ranknet}, we employ log-sigmoid smoothing, which requires computing differences across all pairs in a batch, yielding $O(n^2)$ complexity. Conversely, for pointwise objectives \eqref{eq:rank_form_of_gini_no_constants}, we utilize the differentiable \textit{softrank} operator, which \citet{blondel2020fast} demonstrated can be computed in $O(n \log n)$ time. Since Stage~2 calibration via the Pool Adjacent Violators (PAV) algorithm requires only linear time on sorted data \citep{barlow1972statistical}, the overall training complexity scales as $O(n^2)$ for pairwise losses and $O(n \log n)$ for pointwise losses.

\section{Numerical Studies}
\label{sec:experiments}

We empirically validate the \textsc{CAIRO} framework through controlled simulations and real-world regression benchmarks. Our primary objective is to assess regression and ranking accuracy  across varying noise regimes, isolating the impact of the Stage~1 ranking objective.

\paragraph{CAIRO model variants.}
We implement three model variants of the \textsc{CAIRO} framework. All variants share the same Stage~2 procedure (isotonic calibration) but differ in their Stage~1 loss and smoothing strategy:
\begin{itemize}
    \item \texttt{RankNet}: Pairwise RankNet loss \eqref{eq:weighted-ranknet} with uniform weights $w_{ij}=1$ and log-sigmoid smoothing.
    \item \texttt{RankNet-GiniW}: Pairwise RankNet loss \eqref{eq:weighted-ranknet} with absolute value weights $w_{ij}=|y_i-y_j|$ and log-sigmoid smoothing.
    \item \texttt{GiniNet-SoftRank}: Pointwise Gini covariance loss \eqref{eq: smooth_gini_loss} with \textit{softrank} smoothing.
\end{itemize}
Stage~1 scores are parameterized by a two-layer MLP ($32 \times 16$, ReLU), optimized via Adam. Figure~\ref{fig:mechanism_comparison} illustrates this two-stage learning process for the \texttt{RankNet} model variant across data regimes~1 and 3.

\paragraph{Baselines and metrics.}
We benchmark against three regression standards: (i) \texttt{NN-MSE}, an MLP with the identical architecture trained on squared error; (ii) \texttt{RandomForest} (RF), an ensemble of CART trees; and (iii) \texttt{LightGBM}, a gradient-boosted decision tree model minimizing MSE. Performance is evaluated using Root Mean Squared Error (RMSE) for regression accuracy, and Spearman's $\rho_S$ and Kendall's $\tau$ for ranking quality.

\subsection{Synthetic Data-Generating Processes}
\label{subsec:dgps}
Covariates $X \in \mathbb{R}^{d}$ and weights $w$ are drawn from $\mathcal{N}(0, I_d)$, with $n=6000$ and $d=10$. We define the linear function $\eta = X^\top w / \sqrt{d}$ and generate responses $Y$ by applying three distinct mechanisms to this linear function $\eta$:

\begin{enumerate}
    \item \textit{Normal:} A standard linear regression setting where $Y = \eta + \varepsilon$ with $\varepsilon \sim \mathcal{N}(0,1)$. This serves as a baseline where MSE is theoretically optimal.
    
    \item \textit{Heteroskedastic and Gamma tail}: We apply a canonical log-link, setting $\mu = \exp(\eta)$. Responses are drawn from $Y \sim \mathrm{Gamma}(k, \mu/k)$ with shape $k=2$. Here, the variance $\mu/k$ scales with the mean $\mu$, introducing heteroskedasticity. 
    
    \item \textit{Heteroskedastic and heavy tail}: We again use the mean $\mu = \exp(\eta)$ but introduce additive, heteroskedastic and heavy-tailed noise: $Y = \mu + \varepsilon \cdot \sqrt{\mu}$, where $\varepsilon$ is drawn from a Lognormal distribution.
\end{enumerate}

\subsection{Results on Synthetic Data}
\label{subsec:synthetic_results}

\begin{figure*} 
    \centering
\includegraphics[width=\textwidth]{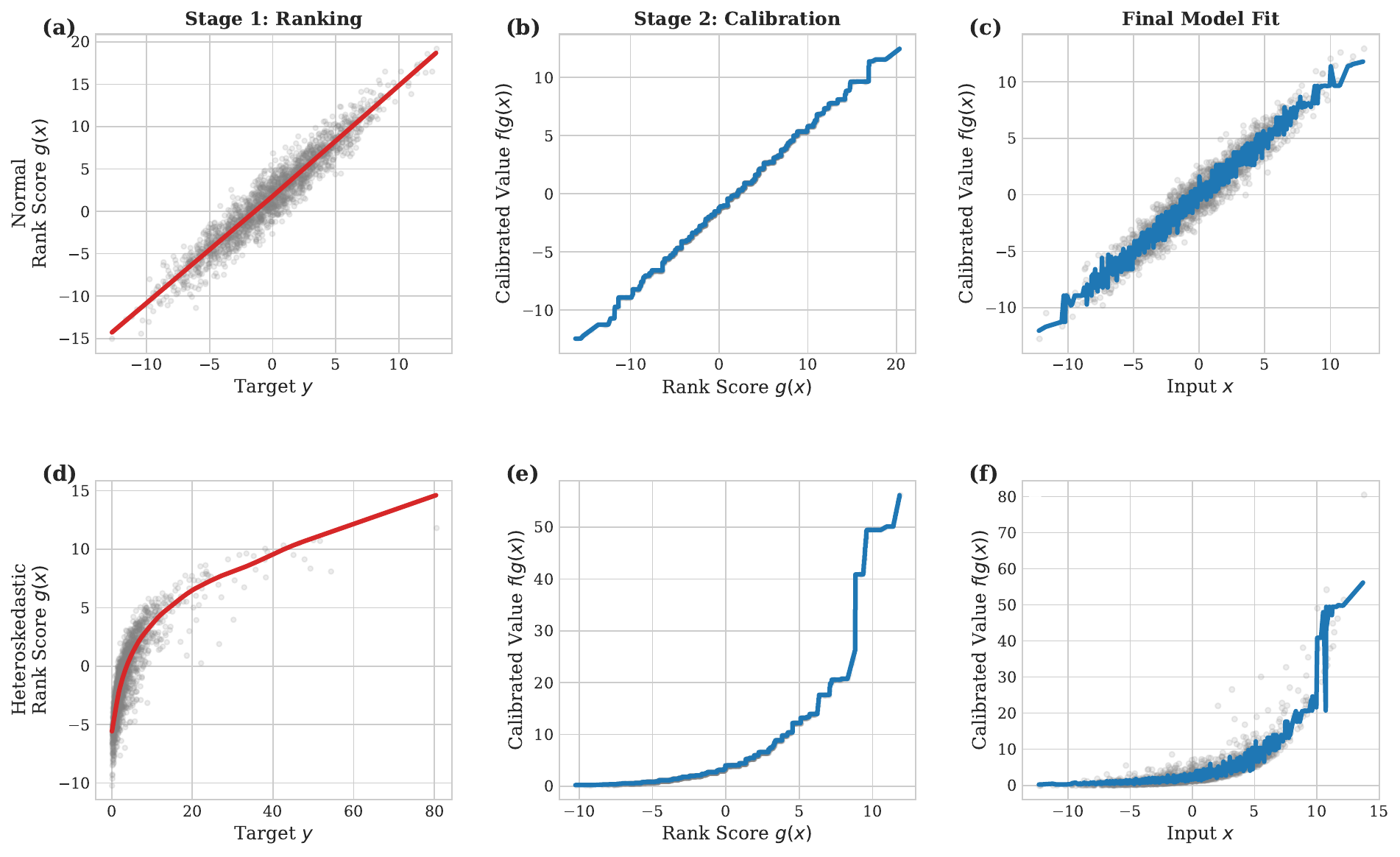}
    \caption{\texttt{CAIRO-RankNet} across Data Regimes. 
    \textbf{Top Row} (Normal Regime): Stage~1 (a) learns a linear ranking, and Stage~2 (b) learns a linear calibration map, resulting in a standard regression fit (c).
    \textbf{Bottom Row} (Heteroskedastic and Heavy Tail Regime): In the presence of heavy-tailed noise, Stage~1 (d) learns a non-linear ranking to preserve order. Stage~2 (e) learns a non-parametric step function to correct this warp. The final model (f) successfully recovers the underlying signal (blue line) despite the large variance in the raw data (gray cloud).}
    \label{fig:mechanism_comparison}
\end{figure*}

The results for the synthetic experiments are reported in Table~\ref{tab:clean_all}.  We provide some comments below.

\paragraph{Performance under homoskedastic noise.}
In the \textit{Normal} setting, the \texttt{NN-MSE} baseline is theoretically optimal; it achieves the lowest RMSE as expected. \texttt{GiniNet-SoftRank} proves highly effective here, matching the baseline's precision while maintaining strong ranking metrics. \texttt{RankNet-GiniW} underperforms slightly, suggesting that pointwise objectives (via SoftRank) are better suited in high-SNR regimes.

\paragraph{Robustness to heavy-tailed noise.}
The advantages of the \textsc{CAIRO} framework emerge in the \textit{Heteroskedastic Tail} setting, especially under heavy-tailed noise. Here, standard MSE minimization fails, as the squared error loss is disproportionately sensitive to extreme outliers. In contrast, all \textsc{CAIRO} variants remain stable. Notably, \texttt{RankNet(-GiniW)} achieves the best performance, outperforming both the baseline and the \texttt{GiniNet-SoftRank}. We attribute this robustness to the log-sigmoid component of the pairwise loss, which saturates for large error magnitudes, effectively capping the gradient contribution of outliers. While \texttt{GiniNet-SoftRank} still improves upon the MSE baseline, it does not match the robustness of the pairwise approach in this regime.

\begin{table}[H]
    \centering
    \caption{Synthetic data performance across noise models: mean (95\% CI) over $5$ repetitions. Best value per scenario/metric is in bold.}
    \label{tab:clean_all}
    \renewcommand{\arraystretch}{1.05} 
    \begin{tabular}{llccc}
        \toprule
         & Model & Spearman's $\rho$ & Kendall's $\tau$ & RMSE \\
        \midrule
        Scenario & \multicolumn{4}{c}{Normal ($\sigma = 1.0$)} \\
        \midrule
        & CAIRO--RankNet
        & $0.594 \pm 0.074$ & $0.435 \pm 0.059$ & $1.093 \pm 0.026$ \\
        & CAIRO--RankNet-GiniW
        & $0.588 \pm 0.074$ & $0.431 \pm 0.056$ & $1.095 \pm 0.025$ \\
        & CAIRO--GiniNet-SoftRank
        & $0.654 \pm 0.051$ & $\mathbf{0.484 \pm 0.041}$ & $1.031 \pm 0.010$ \\
        & NN-MSE
        & $\mathbf{0.667 \pm 0.045}$ & $0.481 \pm 0.037$ & $\mathbf{1.004 \pm 0.008}$ \\
        & Random Forest
        & $0.642 \pm 0.045$ & $0.459 \pm 0.037$ & $1.039 \pm 0.020$ \\
        & LightGBM--MSE
        & $0.643 \pm 0.056$ & $0.461 \pm 0.046$ & $1.033 \pm 0.019$ \\
        \midrule
        \multicolumn{5}{c}{Heteroskedastic and Gamma tail (shape $k = 2$)} \\
        \midrule
        & CAIRO--RankNet
        & $0.701 \pm 0.058$ & $0.527 \pm 0.050$ & $2.238 \pm 0.726$ \\
        & CAIRO--RankNet-GiniW
        & $0.714 \pm 0.058$ & $0.540 \pm 0.050$ & $2.082 \pm 0.577$ \\
        & CAIRO--GiniNet-SoftRank
        & $0.730 \pm 0.040$ & $\mathbf{0.553 \pm 0.037}$ & $\mathbf{2.040 \pm 0.554}$ \\
        & NN-MSE
        & $\mathbf{0.736 \pm 0.040}$ & $0.544 \pm 0.037$ & $2.125 \pm 0.775$ \\
        & Random Forest
        & $0.698 \pm 0.034$ & $0.510 \pm 0.030$ & $2.176 \pm 0.783$ \\
        & LightGBM--MSE
        & $0.695 \pm 0.037$ & $0.508 \pm 0.032$ & $2.112 \pm 0.687$ \\
        \midrule
        \multicolumn{5}{c}{Heteroskedastic and heavy tail} \\
        \midrule
        & CAIRO--RankNet
        & $0.858 \pm 0.037$ & $0.693 \pm 0.043$ & $\mathbf{1.423 \pm 0.089}$ \\
        & CAIRO--RankNet-GiniW
        & $0.860 \pm 0.039$ & $0.696 \pm 0.044$ & $1.449 \pm 0.095$ \\
        & CAIRO--GiniNet-SoftRank
        & $\mathbf{0.863 \pm 0.033}$ & $\mathbf{0.698 \pm 0.039}$ & $1.580 \pm 0.239$ \\
        & NN-MSE
        & $0.848 \pm 0.044$ & $0.667 \pm 0.047$ & $1.882 \pm 0.513$ \\
        & Random Forest
        & $0.830 \pm 0.026$ & $0.642 \pm 0.028$ & $1.873 \pm 0.533$ \\
        & LightGBM--MSE
        & $0.854 \pm 0.026$ & $0.672 \pm 0.030$ & $1.705 \pm 0.470$ \\
        \bottomrule
    \end{tabular}
\end{table}

\subsection{Real-Data Analysis}
\label{subsec:realdata_setup}

To assess generalization, we evaluate our method on five regression benchmarks from the UCI Machine Learning Repository \citep{frank2010uci}: Auto MPG (AUTO), Communities and Crime (CCRI), Computer Hardware (COMP), Concrete Slump Test (CSLU), and Abalone (ABALONE). These datasets vary in size ($n \in [103, 4177]$) and dimensionality ($p \in [6, 102]$); see Table~\ref{tab:realdata}.

\paragraph{Preprocessing.}
We follow the preprocessing protocol of \citet{roy2012robustness}. For AUTO, we remove the high-cardinality \emph{car name} feature and incomplete cases (resulting in $n=392$). For CCRI, we discard predictors with more than $58\%$ missing values (resulting in $p=102$). For ABALONE, we regress strictly on continuous shell measurements to predict Age.  All datasets are devoid of missing values after preprocessing.

\paragraph{Results.}
We compare \textsc{CAIRO} variants against the regression baselines detailed in the introductory paragraphs of Section~\ref{sec:experiments}. Table~\ref{tab:real_all} reports performance across five random splits.

First, we observe that \texttt{NN-MSE} exhibits significant instability, particularly on smaller datasets like AUTO and CSLU. In contrast, the \textsc{CAIRO} variants, specifically \texttt{RankNet-GiniW} and \texttt{GiniNet-SoftRank}, demonstrate much better results using identical neural network architectures. This suggests that the rank-based objectives in CAIRO variants may stabilize neural networks in low-data regimes.

Second, \textsc{CAIRO} proves competitive with tree ensembles, which are typically dominant in tabular domains. On the \textit{Computer Hardware} (COMP) dataset, characterized by significant price outliers, \texttt{RankNet-GiniW} achieves an RMSE of $73.7$, substantially outperforming Random Forest ($87.8$) and LightGBM ($94.2$); this mirrors our synthetic results regarding the robustness of pairwise losses. On \textit{Concrete Slump} (CSLU), the smallest dataset ($n=103$), \textsc{CAIRO} outperforms all baselines (RMSE $3.52$ vs. $4.45$ for RF), indicating superior data efficiency. Finally, on the larger ABALONE task, \textsc{CAIRO} yields the highest ranking correlations ($\rho \approx 0.78$) and lowest RMSE ($2.12$). These results demonstrate that \textsc{CAIRO} effectively closes the performance gap between neural networks and tree-based models on tabular regression tasks.


\begin{table}[H]
    \centering
    \caption{Characteristics of the real datasets used in our study.}
    \label{tab:realdata}
    \renewcommand{\arraystretch}{1.05} 
    \begin{tabular}{lcccc}
        \toprule
        Dataset & No.\ of data points ($n$) & No.\ of predictors ($p$) & No.\ of unique response values & Source \\
        \midrule
        AUTO     & 392  & 7   & 127 & UCI \\
        CCRI     & 1994 & 102 &  98 & UCI \\
        COMP     & 209  & 6   & 116 & UCI \\
        CSLU     & 103  & 7   &  90 & UCI \\
        ABALONE  & 4177 & 7   &  29 & UCI \\
        \bottomrule
    \end{tabular}
\end{table}

\begin{table} 
    \centering
    \caption{Real data performance on five UCI regression tasks.
    Entries are mean (95\% CI) over five random 70/30 train--test splits.
    Best value per dataset/metric is in bold.}
    \label{tab:real_all}
    \renewcommand{\arraystretch}{1} 
    \begin{tabular}{llccc}
        \toprule
         & Model & Spearman's $\rho$ & Kendall's $\tau$ & RMSE \\
        \midrule
        Dataset & \multicolumn{4}{c}{AUTO (Auto MPG)} \\
        \midrule
        & \textsc{CAIRO}--GiniNet-SoftRank
        & $0.870 \pm 0.022$ & $0.719 \pm 0.022$ & $4.244 \pm 0.294$ \\
        & \textsc{CAIRO}--RankNet
        & $0.933 \pm 0.011$ & $0.795 \pm 0.020$ & $3.188 \pm 0.204$ \\
        & \textsc{CAIRO}--RankNet-GiniW
        & $0.923 \pm 0.025$ & $0.787 \pm 0.027$ & $3.405 \pm 0.455$ \\
        & NN-MSE
        & $0.050 \pm 0.578$ & $0.026 \pm 0.434$ & $12.321 \pm 4.540$ \\
        & Random Forest
        & $\mathbf{0.937 \pm 0.015}$ & $\mathbf{0.798 \pm 0.017}$ & $\mathbf{2.994 \pm 0.239}$ \\
        & LightGBM--MSE
        & $0.935 \pm 0.015$ & $0.796 \pm 0.019$ & $3.010 \pm 0.230$ \\
        \midrule
        \multicolumn{5}{c}{CCRI (Communities and Crime)} \\
        \midrule
        & \textsc{CAIRO}--GiniNet-SoftRank
        & $0.810 \pm 0.018$ & $\mathbf{0.640 \pm 0.020}$ & $0.142 \pm 0.002$ \\
        & \textsc{CAIRO}--RankNet
        & $0.757 \pm 0.013$ & $0.586 \pm 0.013$ & $0.160 \pm 0.005$ \\
        & \textsc{CAIRO}--RankNet-GiniW
        & $0.762 \pm 0.010$ & $0.593 \pm 0.011$ & $0.159 \pm 0.004$ \\
        & NN-MSE
        & $0.475 \pm 0.371$ & $0.358 \pm 0.280$ & $0.190 \pm 0.053$ \\
        & Random Forest
        & $\mathbf{0.817 \pm 0.015}$ & $0.631 \pm 0.016$ & $\mathbf{0.138 \pm 0.004}$ \\
        & LightGBM--MSE
        & $0.812 \pm 0.012$ & $0.626 \pm 0.013$ & $\mathbf{0.138 \pm 0.003}$ \\
        \midrule
        \multicolumn{5}{c}{COMP (Computer Hardware)} \\
        \midrule
        & \textsc{CAIRO}--GiniNet-SoftRank
        & $0.784 \pm 0.042$ & $0.644 \pm 0.042$ & $105.055 \pm 64.484$ \\
        & \textsc{CAIRO}--RankNet
        & $0.800 \pm 0.072$ & $0.657 \pm 0.070$ & $89.822 \pm 31.436$ \\
        & \textsc{CAIRO}--RankNet-GiniW
        & $0.833 \pm 0.043$ & $0.688 \pm 0.045$ & $\mathbf{73.741 \pm 34.310}$ \\
        & NN-MSE
        & $0.392 \pm 0.363$ & $0.297 \pm 0.294$ & $334.020 \pm 142.277$ \\
        & Random Forest
        & $\mathbf{0.904 \pm 0.022}$ & $\mathbf{0.753 \pm 0.031}$ & $87.826 \pm 27.136$ \\
        & LightGBM--MSE
        & $0.808 \pm 0.077$ & $0.638 \pm 0.079$ & $94.157 \pm 27.191$ \\
        \midrule
        \multicolumn{5}{c}{CSLU (Concrete Slump Test)} \\
        \midrule
        & \textsc{CAIRO}--GiniNet-SoftRank
        & $0.769 \pm 0.114$ & $0.604 \pm 0.106$ & $5.207 \pm 0.895$ \\
        & \textsc{CAIRO}--RankNet
        & $0.822 \pm 0.189$ & $0.717 \pm 0.145$ & $4.807 \pm 2.306$ \\
        & \textsc{CAIRO}--RankNet-GiniW
        & $\mathbf{0.900 \pm 0.013}$ & $\mathbf{0.760 \pm 0.027}$ & $\mathbf{3.522 \pm 0.420}$ \\
        & NN-MSE
        & $0.033 \pm 0.409$ & $0.038 \pm 0.320$ & $23.394 \pm 12.779$ \\
        & Random Forest
        & $0.890 \pm 0.030$ & $0.733 \pm 0.043$ & $4.445 \pm 0.394$ \\
        & LightGBM--MSE
        & $0.844 \pm 0.034$ & $0.682 \pm 0.042$ & $4.519 \pm 0.251$ \\
        \midrule
        \multicolumn{5}{c}{ABALONE} \\
        \midrule
        & \textsc{CAIRO}--GiniNet-SoftRank
        & $0.631 \pm 0.040$ & $0.503 \pm 0.034$ & $2.610 \pm 0.042$ \\
        & \textsc{CAIRO}--RankNet
        & $\mathbf{0.781 \pm 0.014}$ & $\mathbf{0.637 \pm 0.013}$ & $\mathbf{2.115 \pm 0.042}$ \\
        & \textsc{CAIRO}--RankNet-GiniW
        & $0.778 \pm 0.015$ & $0.635 \pm 0.013$ & $2.116 \pm 0.036$ \\
        & NN-MSE
        & $0.634 \pm 0.028$ & $0.481 \pm 0.023$ & $3.255 \pm 0.102$ \\
        & Random Forest
        & $0.763 \pm 0.019$ & $0.606 \pm 0.017$ & $2.172 \pm 0.023$ \\
        & LightGBM--MSE
        & $0.761 \pm 0.015$ & $0.603 \pm 0.014$ & $2.192 \pm 0.045$ \\
        \bottomrule
    \end{tabular}
\end{table}

\section{Discussions}

We have introduced \textsc{CAIRO}, a framework that decouples regression into two distinct statistical tasks: learning the correct ordering of predictions and subsequently recovering their scale. By prioritizing the ranking objective, we circumvent the sensitivity of standard pointwise losses, such as Mean Squared Error (MSE), to outliers and heavy-tailed noise. Our theoretical analysis identifies a broad class of Optimal-in-Rank-Order (ORO) loss functions---including smooth approximations of Kendall's $\tau$ and Gini covariance---that guarantee consistency with the true conditional mean. We further prove that composing these rankers with isotonic calibration recovers the regression function $m^*(x) = \mathbb{E}[Y|X=x]$ at the population level. 

The decoupling of order and scale provides two statistical benefits: robustness and auto-calibration. First, as demonstrated in our noise experiments, standard MSE minimization is often driven by a few high-leverage outliers. In contrast, rank-based objectives remain stable because they depend on the sign of pairwise differences or the rank of targets, rather than absolute magnitudes. Second, standard deep neural networks trained via MSE frequently suffer from local miscalibration. By deferring scale recovery to a Stage~2 isotonic projection, \textsc{CAIRO} structurally forces the predictions to be strictly empirically auto-calibrated. This converts flexible but historically uncalibrated neural representations into reliable, locally unbiased statistical forecasters. Our framework suggests a re-evaluation of standard regression practices.

\paragraph{Limitations and future work.} While CAIRO is consistent at the population level, its finite-sample performance depends critically on the quality of the first-stage ranker. If the initial model fails to capture the true ordering of the conditional mean, no amount of monotonic calibration can recover the correct regression function. Additionally, the computational cost of pairwise ranking losses scales quadratically with batch size ($O(n^2)$), which may necessitate efficient sampling strategies or listwise approximations for very large datasets. Future work should explore more flexible calibration methods beyond global isotonic regression.   Finally, extending this two-stage decoupling to multivariate or structured output prediction remains an open and promising direction.

\bibliography{references}
\bibliographystyle{icml2025}
\appendix
\onecolumn

\section{Results and Proofs}
For clarity of exposition, our theoretical analysis assumes that the score distribution is continuous, i.e., $\mathbb{P}\!\big(g(X)=g(X')\big)=0$. However, this assumption is not strictly necessary. In the presence of ties, our results regarding the Gini covariance and Spearman's correlation remain valid provided that the standard cumulative distribution function is replaced by the \textit{mid-distribution function} \citep{parzen2004quantile}. This function is defined as $\widetilde{F}_g(t)
= \mathbb{P}\!\big(g(X)<t\big) + \frac{1}{2}\,\mathbb{P}\!\big(g(X)=t\big).$ Unlike the standard CDF, the mid-distribution ensures that the centered expectation property $\mathbb{E}\!\big[\widetilde{F}_g(g(X))\big]=1/2$ holds even when $g(X)$ is not continuous, thereby preserving the symmetry required for the covariance identities derived in Theorem~\ref{thm:equivalence}.

\subsection{\textbf{Proof of Composition Theorem 
(Theorem~\ref{composition_is_f_star_statement})}} \label{composition_is_f_star}

\begin{proof}
Let $g$ be the output from the first stage. Then $g = h(m^{\star})$ for a strictly increasing function $h$. We want to characterize the second-stage optimizer $f^{\star}$ such that 
\[
f^{\star} \in \arg\min_{t \in \mathcal{F}} \mathbb{E}\left[(Y - t(g(X)))^2\right].
\]
By adding and subtracting $m^\star$ in the objective and expanding the square, we obtain
\begin{align*}
    \mathbb{E}\!\left[(Y - t(g(X)))^2\right]
&= \mathbb{E}\!\left[(Y - m^\star(X) + m^\star(X) - t(g(X)))^2\right] \\
&= \mathbb{E}\!\left[(Y - m^\star(X))^2\right]
   + \mathbb{E}\!\left[(m^\star(X) - t(g(X)))^2\right] \\
&\quad + 2\,\mathbb{E}\!\left[(Y - m^\star(X))\,(m^\star(X) - t(g(X)))\right].
\end{align*}

The first term $\mathbb{E}\!\left[(Y - m^\star(X))^2\right]$ is independent of $t$, since it depends only on the true regression function $m^\star$.  For the last term, we apply the law of iterated expectations:
\[
\begin{aligned}
\mathbb{E}_{Y, X}\!\left[(Y - m^\star(X))\,(m^\star(X) - t(g(X)))\right]
&= \mathbb{E}_X\!\left[\mathbb{E}_{Y|X}\!\left[(Y - m^\star(X))\,(m^\star(X) - t(g(X))) \,\middle|\, X\right]\right] \\
&= \mathbb{E}_X\!\left[(m^\star(X) - t(g(X)))\,\mathbb{E}_{Y|X}\!\left[Y - m^\star(X) \,\middle|\, X\right]\right] \\
&= 0,
\end{aligned}
\]
since $\mathbb{E}_{Y|X}[Y - m^\star(X) \mid X] = 0$. Hence the objective simplifies to
\[
\arg\min_{t \in \mathcal{F}} \mathbb{E}\left[(Y - t(g(X)))^2\right]
= \arg\min_{t \in \mathcal{F}} \mathbb{E}\left[(m^{\star}(X) - t(g(X)))^2\right].
\]
This is minimized when $t(g(X)) = m^{\star}(X)$, i.e.,
\[
t(h(m^{\star}(X))) = m^{\star}(X) \quad\Longrightarrow\quad t = h^{-1}.
\]
The inverse exists since $h$ is strictly increasing, and $h^{-1} \in \mathcal{F}$ because it is non-decreasing as the inverse of a strictly increasing function. Therefore, the optimal second-stage function is $f^{\star} = h^{-1}$, and
\begin{equation*}
    f^{\star}(g(X)) = h^{-1}(h(m^{\star}(X))) = m^{\star}(X). \qedhere
\end{equation*}
\end{proof}

\subsection{Useful Results}
\label{oro_weighted}
\begin{lemma}
Let $(X,Y)$ and $(X',Y')$ be i.i.d.\ with finite second moments, then
\[
\mathbb{E}\!\left[(Y-Y')(X-X')\right] \;=\; 2\,\mathrm{Cov}(Y,X).
\]
\end{lemma}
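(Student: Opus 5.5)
The plan is to expand the product and use the i.i.d. structure of the two copies. Since $Y,X,Y',X'$ have finite second moments, every product of two of them is integrable by Cauchy--Schwarz. This justifies expanding
\[
(Y-Y')(X-X') = YX - YX' - Y'X + Y'X'
\]
and taking expectations term by term using linearity.

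I will treat the four terms in two groups.

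\begin{itemize}
\item \emph{Diagonal terms.} Because $(X',Y')$ has the same joint distribution as $(X,Y)$, we get $\mathbb{E}[Y'X'] = \mathbb{E}[YX]$. The two diagonal terms therefore contribute $2\,\mathbb{E}[YX]$.
\item \emph{Cross terms.} Independence of $(X,Y)$ and $(X',Y')$ gives $\mathbb{E}[YX'] = \mathbb{E}[Y]\,\mathbb{E}[X'] = \mathbb{E}[Y]\,\mathbb{E}[X]$. The same argument applies to $\mathbb{E}[Y'X]$. The cross terms therefore contribute $-2\,\mathbb{E}[Y]\,\mathbb{E}[X]$.
\end{itemize}

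Adding the two groups gives $2\big(\mathbb{E}[YX] - \mathbb{E}[Y]\,\mathbb{E}[X]\big) = 2\,\mathrm{Cov}(Y,X)$, which is the claim.

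There is no real obstacle here. The only point needing care is integrability, so that the expectation may be split linearly. This is handled by the finite second moment assumption via $|YX'| \le \tfrac12(Y^2 + X'^2)$, and likewise for the other products.

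In the paper this lemma will be applied with $X$ replaced by $F_{g(X)}(g(X))$ or $F_Y(Y)$. Both are bounded, so the moment condition holds trivially whenever $\mathbb{E}[Y^2] < \infty$.
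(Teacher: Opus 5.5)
Your proof is correct and follows the paper's argument exactly: expand the product, factor the cross terms $\mathbb{E}[YX']$ and $\mathbb{E}[Y'X]$ by independence of the two pairs, and equate $\mathbb{E}[Y'X']$ with $\mathbb{E}[YX]$ by identical distribution. Your integrability check via $|YX'|\le\tfrac12(Y^2+X'^2)$ is a small addition the paper leaves implicit.
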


\begin{proof}
\begin{align*}
\mathbb{E}\!\left[(Y-Y')(X-X')\right]
&= \mathbb{E}[YX]-\mathbb{E}[YX']-\mathbb{E}[Y'X]+\mathbb{E}[Y'X'] \\
&= \mathbb{E}[YX]-\mathbb{E}[Y]\mathbb{E}[X']-\mathbb{E}[Y']\mathbb{E}[X]+\mathbb{E}[Y'X'] \\
&= \mathbb{E}[YX]-\mathbb{E}[Y]\mathbb{E}[X]-\mathbb{E}[Y]\mathbb{E}[X]+\mathbb{E}[YX] \\
&= 2\big(\mathbb{E}[YX]-\mathbb{E}[Y]\mathbb{E}[X]\big)
= 2\,\mathrm{Cov}(Y,X).
\end{align*}
In the second line, independence across pairs gives
$\mathbb{E}[YX']=\mathbb{E}[Y]\mathbb{E}[X']$ and
$\mathbb{E}[Y'X]=\mathbb{E}[Y']\mathbb{E}[X]$.
In the third line, i.i.d.\ across pairs yields
$\mathbb{E}[Y'X']=\mathbb{E}[YX]$, $\mathbb{E}[X']=\mathbb{E}[X]$, and $\mathbb{E}[Y']=\mathbb{E}[Y]$.
\end{proof}

\begin{lemma}
Let $X$ and $X'$ be i.i.d.\ real random variables with continuous cdf $F$. With
$\operatorname{sgn}(t)=\mathbbm{1}\{t>0\}-\mathbbm{1}\{t<0\}$, we have
\[
\mathbb{E}\!\left[\operatorname{sgn}(X-X')\mid X\right]=2F(X)-1 \quad \text{a.s.}
\]
\end{lemma}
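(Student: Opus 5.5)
We prove the identity by conditioning on $X$ and writing the conditional expectation of the sign as a difference of two probabilities. Since $X'$ is independent of $X$, we have almost surely
\[
\mathbb{E}\!\left[\operatorname{sgn}(X-X')\mid X\right]
=\mathbb{P}(X'<X\mid X)-\mathbb{P}(X'>X\mid X).
\]
The first step is to evaluate these two conditional probabilities. By independence (the standard substitution or Fubini argument for a measurable function of two independent variables), for $\mathbb{P}$-a.e.\ value $x$ of $X$ we get $\mathbb{P}(X'<X\mid X=x)=\mathbb{P}(X'<x)$. Likewise $\mathbb{P}(X'>X\mid X=x)=1-\mathbb{P}(X'\le x)=1-F(x)$.

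Next we use continuity of $F$. Continuity means $\mathbb{P}(X'=x)=0$ for every $x$, so $\mathbb{P}(X'<x)=\mathbb{P}(X'\le x)=F(x)$. Substituting gives
\[
\mathbb{E}\!\left[\operatorname{sgn}(X-X')\mid X=x\right]=F(x)-(1-F(x))=2F(x)-1,
\]
and evaluating at $x=X$ yields the claim almost surely.

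The only delicate step is justifying the freezing of $X$ in the conditional expectation. We handle it by the independence lemma for conditional expectations. For bounded measurable $\phi(x,x')=\operatorname{sgn}(x-x')$ and independent $X,X'$, we have $\mathbb{E}[\phi(X,X')\mid X]=\psi(X)$ with $\psi(x)=\mathbb{E}[\phi(x,X')]$. This is verified directly by Fubini: for any Borel set $A$, $\mathbb{E}[\phi(X,X')\mathbbm{1}\{X\in A\}]=\mathbb{E}[\psi(X)\mathbbm{1}\{X\in A\}]$. Ties are a null event, so the convention $\operatorname{sgn}(0)=0$ plays no role.
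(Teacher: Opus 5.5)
Your proposal is correct and follows the paper's proof: both write $\operatorname{sgn}(X-X')$ as $\mathbbm{1}\{X>X'\}-\mathbbm{1}\{X<X'\}$, take conditional probabilities given $X$, and evaluate them as $F(X)$ and $1-F(X)$. You additionally justify the two steps the paper uses without comment: you freeze $X$ via the independence lemma/Fubini, and you invoke continuity of $F$ to identify $\mathbb{P}(X'<x)$ with $F(x)$.
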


\begin{proof}
Starting with the decomposition $\operatorname{sgn}(X-X')=\mathbbm{1}\{X>X'\}-\mathbbm{1}\{X<X'\}$, we have
\begin{align*}
    \mathbb{E}[\operatorname{sgn}(X-X')|X] &= \mathbb{E}[\mathbbm{1}\{X>X'\}-\mathbbm{1}\{X<X'\}| X] \\
    &= \mathbb{P}(X>X'|X)-\mathbb{P}(X<X'| X) \\
    &= F(X) - (1-F(X)) \\
    &= 2F(X) - 1 . \qedhere
\end{align*}
\end{proof}

\begin{proposition}
\label{equivalent_pairwise_form}
Let $(X,Y)$ and $(X',Y')$ be i.i.d.\ with finite second moments and independent across pairs. Then
\[
\mathrm{Cov}\!\left(Y,\,F_{g(X)}(g(X))\right)
=\frac{1}{2}\,\mathbb{E}\!\left[(Y-Y')\,\mathbbm{1}\{\,g(X)>g(X')\,\}\right].
\]
\end{proposition}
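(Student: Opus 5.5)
The plan is to start from the right-hand side, rewrite it with a sign function, and then apply the two preceding lemmas. Throughout, use the standing assumption from the start of this appendix that $g(X)$ has a continuous distribution, so $\mathbb{P}(g(X)=g(X'))=0$.

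\textbf{Step 1: symmetrize.} Set $S\equiv g(X)$ and $S'\equiv g(X')$. Swapping the two i.i.d.\ pairs $(X,Y)$ and $(X',Y')$ gives
\[
\mathbb{E}\big[(Y-Y')\mathbbm{1}\{S>S'\}\big]=\mathbb{E}\big[(Y'-Y)\mathbbm{1}\{S'>S\}\big]=-\mathbb{E}\big[(Y-Y')\mathbbm{1}\{S<S'\}\big].
\]
Averaging the two expressions and using that ties have probability zero, we get
\[
\mathbb{E}\big[(Y-Y')\mathbbm{1}\{S>S'\}\big]=\tfrac12\,\mathbb{E}\big[(Y-Y')\operatorname{sgn}(S-S')\big].
\]
All expectations here are finite, since $|Y-Y'|$ is integrable under the finite second moment assumption.

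\textbf{Step 2: condition on the first pair.} Expand $(Y-Y')\operatorname{sgn}(S-S')=Y\operatorname{sgn}(S-S')-Y'\operatorname{sgn}(S-S')$. For the first term, note that $(X',Y')$ is independent of $(X,Y)$. Hence $\mathbb{E}[\operatorname{sgn}(S-S')\mid X,Y]=\mathbb{E}[\operatorname{sgn}(S-S')\mid S]$, and by the second lemma this equals $2F_{g(X)}(S)-1$. Therefore
\[
\mathbb{E}[Y\operatorname{sgn}(S-S')]=\mathbb{E}\big[Y(2F_{g(X)}(S)-1)\big].
\]
The second term is handled by exchangeability: $-\mathbb{E}[Y'\operatorname{sgn}(S-S')]=\mathbb{E}[Y'\operatorname{sgn}(S'-S)]$, which equals the first term. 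So
\[
\mathbb{E}[(Y-Y')\operatorname{sgn}(S-S')]=2\,\mathbb{E}\big[Y(2F_{g(X)}(S)-1)\big].
\]

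\textbf{Step 3: identify the covariance.} Because $S$ is continuous, $F_{g(X)}(S)\sim\mathrm{Unif}(0,1)$, so $\mathbb{E}[2F_{g(X)}(S)-1]=0$. It follows that $\mathbb{E}[Y(2F_{g(X)}(S)-1)]=2\,\mathrm{Cov}(Y,F_{g(X)}(S))$. Chaining the steps, the right-hand side of the statement is
\[
\tfrac12\cdot\tfrac12\cdot 2\cdot 2\,\mathrm{Cov}\big(Y,F_{g(X)}(g(X))\big)=\mathrm{Cov}\big(Y,F_{g(X)}(g(X))\big),
\]
as claimed.

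As an alternative to Step 2, one could apply the first lemma with $X$ replaced by $F_{g(X)}(S)$. This would require rewriting $\operatorname{sgn}(S-S')$ as $F(S)-F(S')$ inside the conditional expectation, which is less direct, so I will use the conditioning argument.

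\textbf{Main obstacle.} The delicate point is the bookkeeping in Steps 1--2: justifying the exchangeability swaps, and making sure ties contribute nothing. Without a continuous score distribution, the second lemma fails as stated and $\mathbb{E}[F(S)]\neq 1/2$. This is exactly where the mid-distribution function from the appendix preamble would have to replace $F$. I would state the proof under continuity and remark that the same steps go through with $\widetilde F_g$, since $\mathbb{E}[\operatorname{sgn}(S-S')\mid S]=2\widetilde F_g(S)-1$ holds in general.
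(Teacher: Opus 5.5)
Your proof is correct and is essentially the paper's argument run in the opposite direction. The paper starts from $2\,\mathrm{Cov}(Y,F_{g(X)}(g(X))) = 2\,\mathbb{E}[Y F_{g(X)}(g(X))]-\mathbb{E}[Y]$, writes $F_{g(X)}(g(X))$ as $\mathbb{E}[\mathbbm{1}\{g(X')\le g(X)\}\mid X,Y]$, and then symmetrizes over the swapped pairs; you symmetrize first into the sign form and then condition via the second lemma, with the same ingredients (conditioning on the independent copy, exchangeability, and $\mathbb{E}[F_{g(X)}(g(X))]=1/2$). A small bonus of your version is that Step~1 needs no tie assumption, since $\operatorname{sgn}(t)=\mathbbm{1}\{t>0\}-\mathbbm{1}\{t<0\}$ identically, so continuity enters only through the second lemma and $\mathbb{E}[F]=1/2$, which makes your mid-distribution extension immediate.
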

\begin{proof}
Let $(X',Y')$ be an i.i.d.\ copy of $(X,Y)$.  Then 
\begin{align}
    2\cdot\text{Cov}\big(Y, F_{g(X)}(g(X))\big) \notag 
    &= 2\Big( \mathbb{E}\big[Y\, F_{g(X)}(g(X))\big] 
              - \mathbb{E}[Y]\,\mathbb{E}\big[F_{g(X)}(g(X))\big] \Big) \notag \\
    &= 2\Big( \mathbb{E}\big[Y\, F_{g(X)}(g(X))\big] - \tfrac{1}{2}\,\mathbb{E}[Y] \Big) \notag \\
    &= 2\,\mathbb{E}\big[Y\, F_{g(X)}(g(X))\big] - \mathbb{E}[Y] \notag \\
    &= 2\,\mathbb{E}\Big[ Y\, \mathbb{E}_{X'|Y,X}\big[\mathbbm{1}\{g(X') \le g(X)\}\mid Y, X\big]\Big] - \mathbb{E}[Y] \notag \\
    &= 2\,\mathbb{E}\big[ Y\,\mathbbm{1}\{g(X') \le g(X)\} \big] - \mathbb{E}[Y]. \label{eq:alt-start}
\end{align}
where we have use independence of $Y$ and $X'$ to condition on $Y$ in the inner expectation in line 4. We also applied the law of total expectation between the pair $(Y, X)$ and $X'$ to obtain the last line. Now, by symmetry of the i.i.d.\ pairs $(X,Y)$ and $(X',Y')$, 
\begin{align}
    2\,\mathbb{E}\big[ Y\,\mathbbm{1}\{g(X') \le g(X)\} \big]
    &= \mathbb{E}\big[ Y\,\mathbbm{1}\{g(X') \le g(X)\} \big]
     + \mathbb{E}\big[ Y'\,\mathbbm{1}\{g(X) \le g(X')\} \big]. \label{eq:alt-split}
\end{align}
We may replace $\le$ by $<$ in the indicators. Combining
\eqref{eq:alt-start} and \eqref{eq:alt-split}, we obtain
\begin{align}
    2\cdot\text{Cov}\big(Y, F_{g(X)}(g(X))\big)
    &= \mathbb{E}\big[ Y\,\mathbbm{1}\{g(X') < g(X)\} \big]
     + \mathbb{E}\big[ Y'\,\mathbbm{1}\{g(X) < g(X')\} \big]
     - \mathbb{E}[Y] \notag \\
    &= \mathbb{E}\big[ Y\,\mathbbm{1}\{g(X') < g(X)\} \big]
     + \mathbb{E}\big[ Y'\,\mathbbm{1}\{g(X) < g(X')\} \big] \notag \\
    &\quad - \mathbb{E}\big[ Y\big(\mathbbm{1}\{g(X') < g(X)\} + \mathbbm{1}\{g(X) < g(X')\}\big) \big] \notag \\
    &= \mathbb{E}\big[ Y'\,\mathbbm{1}\{g(X) < g(X')\} \big]
     - \mathbb{E}\big[ Y\,\mathbbm{1}\{g(X) < g(X')\} \big] \notag \\
    &= \mathbb{E}\big[ (Y' - Y)\,\mathbbm{1}\{g(X) < g(X')\} \big] \notag \\
    &= \mathbb{E}\big[ (Y - Y')\,\mathbbm{1}\{g(X) > g(X')\} \big].
\end{align}
This proves
\[
2\cdot\text{Cov}\big(Y, F_{g(X)}(g(X))\big)
= \mathbb{E}\big[ (Y - Y')\,\mathbbm{1}\{g(X) > g(X')\} \big],
\]
which is the desired identity.
\end{proof}

\subsection{Proof of Theorem~\ref{thm:equivalence}} \label{proof:equivalence_of_losses}
\begin{proof}
For any nonnegative symmetric weight $w(Y,Y')$ with $\mathbb{E}[w(Y,Y')]<\infty$, define
\begin{align}
\mathcal{L}(g) &\equiv\mathbb{E}\!\left[w(Y,Y')\,\mathbbm{1}\!\left\{(Y-Y')\,\operatorname{sign}\!\big(g(X)-g(X')\big)\le 0\right\}\right] \nonumber \\
    &= \mathbb{E}\!\left[w(Y,Y')\,\mathbbm{1}\!\left\{(Y-Y')\big(g(X)-g(X')\big)<0\right\}\right].
\label{base_eq}
\end{align}

\medskip
\noindent\textbf{1) Uniform weight and Kendall's $\tau$.}
Take $w(Y,Y')\equiv 1$, then $\mathcal{L}\xrightarrow{}\mathcal{L}_{\mathrm{uni}}$. Also apply the following transformation:

\begin{equation}
\label{eq: trick}
    \mathbbm{1}\!\left\{(Y-Y')\big(g(X)-g(X')\big)<0\right\}
=
\frac{1-\operatorname{sign}(Y-Y')\,\operatorname{sign}\!\big(g(X)-g(X')\big)}{2} .
\end{equation}
Plugging this into \eqref{base_eq} yields
\begin{align*}
\mathcal{L}_{\mathrm{uni}}(g)
=
\frac12
-\frac12\,\mathbb{E}\!\Big[\operatorname{sign}(Y-Y')\,\operatorname{sign}\!\big(g(X)-g(X')\big)\Big].
\end{align*}
The population Kendall correlation between $Y$ and $g(X)$ is
\begin{align*}
    \tau\big(Y,g(X)\big)
    &\equiv
    \mathbb{E}\!\Big[\operatorname{sign}(Y-Y')\,\operatorname{sign}\!\big(g(X)-g(X')\big)\Big].\\
    &= 1-2\cdot\mathcal{L}_{\mathrm{uni}}(g) .
\end{align*}

\medskip
\noindent\textbf{2) Absolute gap weight and the Gini covariance.}
Take $w(Y,Y')=|Y-Y'|$, then $\mathcal{L}\xrightarrow{}\mathcal{L}_{\mathrm{abs}}$. 
\begin{align}
\mathcal{L}_{\mathrm{abs}}(g)
&= \mathbb{E}\!\left[\,|Y-Y'|\,\mathbbm{1}\!\left\{(Y-Y')\big(g(X)-g(X')\big)<0\right\}\right] \nonumber \\
&= \mathbb{E}\!\left[\frac{|Y-Y'|}{2}
-\frac{|Y-Y'|\,\operatorname{sign}(Y-Y')\,\operatorname{sign}\!\big(g(X)-g(X')\big)}{2}\right] \nonumber \\
&= \frac12\,\mathbb{E}|Y-Y'|
-\frac12\,\mathbb{E}\!\Big[(Y-Y')\,\operatorname{sign}\!\big(g(X)-g(X')\big)\Big]
\label{eq:abs-master}
\end{align}
where we used the transformation of \eqref{eq: trick} in the second line and also that $|Y-Y'|\operatorname{sign}(Y-Y')=Y-Y'$ to go from second to third equality. Now write $\operatorname{sign}\!\big(g(X)-g(X')\big) = \mathbbm{1}\{g(X)>g(X')\}-\mathbbm{1}\{g(X)<g(X')\}$
Therefore, by plugging this in \eqref{eq:abs-master} and further manipulating by swapping the i.i.d.\ pairs $(X,Y)\leftrightarrow (X',Y')$ and using symmetry,
\begin{align}
\mathbb{E}\!\Big[(Y-Y')\,\operatorname{sign}\!\big(g(X)-g(X')\big)\Big]
&=
\mathbb{E}\!\Big[(Y-Y')\,\mathbbm{1}\{g(X)>g(X')\}\Big]
-\mathbb{E}\!\Big[(Y-Y')\,\mathbbm{1}\{g(X)<g(X')\}\Big] \nonumber \\
&= 2\,\mathbb{E}\!\Big[(Y-Y')\,\mathbbm{1}\{g(X)>g(X')\}\Big].
\label{eq:sign-to-indicator}
\end{align}
Invoking Proposition~\ref{equivalent_pairwise_form}, we combine it with \eqref{eq:sign-to-indicator} to get
\[
\mathbb{E}\!\Big[(Y-Y')\,\operatorname{sign}\!\big(g(X)-g(X')\big)\Big]
=
4\,\mathrm{Cov}\!\left(Y,\,F_g(g(X))\right).
\]
Plugging into \eqref{eq:abs-master} yields the claimed identity:
\[
\mathcal{L}_{\mathrm{abs}}(g)
=
\frac12\,\mathbb{E}|Y-Y'|
-2\,\mathrm{Cov}\!\left(Y,\,F_g(g(X))\right) .
\]
\medskip
\noindent\textbf{3) CDF-rank gap weight and Spearman's $\rho_S$.}
Take $w(Y,Y')=\big|F_Y(Y)-F_Y(Y')\big|$, then $\mathcal{L}\xrightarrow{}\mathcal{L}_{\mathrm{cdf}}$ where $F_Y$ is the CDF of $Y$.
Under the assumption, $F_Y$ is continuous and strictly increasing on the support of $Y$, so $\operatorname{ sign}(Y-Y')=\operatorname{sign}\!\big(F_Y(Y)-F_Y(Y')\big)$. Consequently, 
\begin{align}
\mathcal{L}_{\mathrm{cdf}}(g)
&= \mathbb{E}\!\left[\,|F_Y(Y)-F_Y(Y')|\,\mathbbm{1}\!\left\{(Y-Y')\big(g(X)-g(X')\big)<0\right\}\right] \nonumber \\
&= \mathbb{E}\!\left[\,|F_Y(Y)-F_Y(Y')|\,\mathbbm{1}\!\left\{\operatorname{sign}(Y-Y')\operatorname{sign}(g(X)-g(X')\big)<0\right\}\right] \nonumber \\
&=\mathbb{E}\!\left[\big|F_Y(Y)-F_Y(Y')\big|\,
\mathbbm{1}\!\left\{\operatorname{sign}\!\big(F_Y(Y)-F_Y(Y')\big)\neq \operatorname{sign}\!\big(g(X)-g(X')\big)\right\}\right] \nonumber \\
&=
\frac12\,\mathbb{E}\big|F_Y(Y)-F_Y(Y')\big|
-\frac12\,\mathbb{E}\!\Big[\big(F_Y(Y)-F_Y(Y')\big)\,\operatorname{sign}\!\big(g(X)-g(X')\big)\Big].
\label{eq:cdf-master}
\end{align}
\smallskip

\emph{First term.} $F_Y(Y)$ is $\mathrm{Unif}(0,1)$, so
\[
\mathbb{E}\big|F_Y(Y)-F_Y(Y')\big|
=
\int_0^1\!\!\int_0^1 |u-v|\,du\,dv
=
\frac13.
\]

\smallskip
\emph{Second term.} Apply the sign-to-indicator identity \eqref{eq:sign-to-indicator}
with $Y$ replaced by $F_Y(Y)$, and then apply Proposition~\ref{equivalent_pairwise_form} again, now with the random variable $F_Y(Y)$
in place of $Y$:
\begin{align}
    \mathbb{E}\!\Big[\big(F_Y(Y)-F_Y(Y')\big)\,\operatorname{sign}\!\big(g(X)-g(X')\big)\Big] &= 2\,\mathbb{E}\!\Big[\big(F_Y(Y)-F_Y(Y')\big)\,\mathbbm{1}\{g(X)>g(X')\}\Big] \nonumber \\
    &= 4\mathrm{Cov}\!\left(F_Y(Y),\,F_g(g(X))\right) . \nonumber
\end{align}
Substituting into \eqref{eq:cdf-master} yields
\[
\mathcal{L}_{\mathrm{cdf}}(g)
=
\frac12\cdot\frac13
-2\,\mathrm{Cov}\!\left(F_Y(Y),\,F_g(g(X))\right)
=
\frac16
-2\,\mathrm{Cov}\!\left(F_Y(Y),\,F_g(g(X))\right)
\equiv \frac16-\frac{\rho_S(Y,g(X))}{6} .
\]
\medskip
This completes the proof of all three claims.
\end{proof}

\subsection{Proof of Theorem \ref{thm:oro_all_three}: ORO property}
\label{proof_of_oro}

\begin{proof}
Fix $(x,x')$ and consider the conditional pairwise ranking loss
\[
\mathcal{L}_w(Y, g(X); x,x')
\;\equiv\;
\mathbb{E}\!\left[
w(Y,Y')\,
\mathbbm{1}\!\left\{(Y-Y')\big(g(x)-g(x')\big)<0\right\}
\;\bigm|\; X=x,\,X'=x'
\right],
\]
where the inequality is strict under (A1) (no ties in $Y$). The optimal choice is to pick the sign of $g(x)-g(x')$
that yields the smaller conditional risk; i.e.\ one should choose $g(x)>g(x')$ whenever
\begin{align}
&\mathbb{E}\!\left[
w(Y,Y')\,\mathbbm{1}\{Y<Y'\}\,\big|\, X=x,X'=x'
\right]
\le
\mathbb{E}\!\left[
w(Y,Y')\,\mathbbm{1}\{Y>Y'\}\,\big|\, X=x,X'=x'
\right] \nonumber \\
&\implies\quad
\mathbb{E}\!\left[
w(Y,Y')\big(\mathbbm{1}\{Y<Y'\}-\mathbbm{1}\{Y>Y'\}\big)\,\big|\, X=x,X'=x'
\right]
\le 0 \nonumber \\
&\implies\quad
\mathbb{E}\!\left[
w(Y,Y')\,\operatorname{sign}(Y-Y')\,\big|\, X=x,X'=x'
\right]
\ge 0 \, . \nonumber
\end{align}

If we define the quantity $S_w(x,x')
\equiv
\mathbb{E}[
w(Y,Y')\operatorname{sign}(Y-Y')
|\; X=x,X'=x'
],$
then the pairwise optimal decision rule is $\mathrm{sign}\!\big(g(x)-g(x')\big)
=
\mathrm{sign}\!\big(S_w(x,x')\big)$.  Consequently, to prove the ORO property it suffices to show that, for each of the three weights,
\begin{equation}
\label{eq:key_alignment}
\mathrm{sign}\!\big(S_w(x,x')\big)
=
\mathrm{sign}\!\big(m^\star(x)-m^\star(x')\big) .
\end{equation}
Indeed, if \eqref{eq:key_alignment} holds, then any population minimizer must order
pairs $(x,x')$ the same way as $m^\star$, which implies $g^\star=h( m^\star)$ for some strictly
increasing $h$, i.e.\ $g^\star\in\mathcal{G}'$. Conversely, any strictly increasing transform of $m^\star$ induces the same ordering and therefore achieves the same minimal conditional risk for all pairs $(x,x')$,
yielding global optimality.

\smallskip
\noindent\textbf{Case 1: $w(y,y')\equiv 1$}.
Here:
\begin{align*}
    S_{\mathrm{uni}}(x,x') 
    &= \mathbb{E}\!\left[\mathrm{sign}(Y-Y')\mid X=x,X'=x'\right] \\
    &= \mathbb{P}(Y>Y'\mid x,x')-\mathbb{P}(Y<Y'\mid x,x')\\
    &= \mathbb{P}(Y>Y'\mid x,x')-(1-\mathbb{P}(Y>Y'\mid x,x'))\\
    &= 2\,\mathbb{P}(Y>Y'\mid x,x')-1.
\end{align*}
where we have assumed the assumption of no ties in the third equation to sum both probabilities to 1. By assumption, $\mathbb{P}(Y>Y'\mid x,x')>\tfrac12$ if and only if
$m^\star(x)>m^\star(x')$, which gives \eqref{eq:key_alignment} for $w\equiv 1$.

\noindent\textbf{Case 2: $w(y,y') = |y-y'|$}.
Here:
\begin{align*}
S_{\mathrm{abs}}(x,x')
&= \mathbb{E}\!\left[\,|Y-Y'|\,\mathrm{sign}(Y-Y') \mid X=x,X'=x' \right] \\
&= \mathbb{E}\!\left[\,Y-Y' \mid X=x,X'=x' \right] \\
&= \mathbb{E}[Y\mid X=x] - \mathbb{E}[Y'\mid X'=x'] \\
&= m^\star(x) - m^\star(x').
\end{align*}
In the second line we used the identity $|t|\,\mathrm{sign}(t)=t$. Therefore
\[
\mathrm{sign}\!\big(S_{\mathrm{abs}}(x,x')\big)
=
\mathrm{sign}\!\big(m^\star(x)-m^\star(x')\big),
\]
so \eqref{eq:key_alignment} holds for the absolute-gap weight.

\smallskip
\noindent\textbf{Case 3: $w(y,y') = |F_Y(y)-F_Y(y')|$}.
Under Spearman's $\rho$ in case~\ref{Spearman}, $F_Y$ is continuous and strictly increasing, hence $\mathrm{sign}\!\big(F_Y(Y)-F_Y(Y')\big)=\mathrm{sign}(Y-Y')$, therefore:
\begin{align*}
S_{\mathrm{cdf}}(x,x')
&= \mathbb{E}\!\left[\,|F_Y(Y)-F_Y(Y')|\,\mathrm{sign}(Y-Y') \mid X=x,X'=x' \right] \\
&= \mathbb{E}\!\left[\,|F_Y(Y)-F_Y(Y')|\,\mathrm{sign}\!\big(F_Y(Y)-F_Y(Y')\big) \mid X=x,X'=x' \right] \\
&= \mathbb{E}\!\left[\,F_Y(Y)-F_Y(Y') \mid X=x,X'=x' \right] \\
&= \mathbb{E}\!\left[F_Y(Y)\mid X=x\right] - \mathbb{E}\!\left[F_Y(Y')\mid X'=x'\right] \\
&= \eta(x) - \eta(x'),
\end{align*}
where $\eta(x)\equiv \mathbb{E}[F_Y(Y)\mid X=x]$. In the third line we used the identity
$|t|\,\mathrm{sign}(t)=t$. It follows that
\[
\mathrm{sign}\!\big(S_{\mathrm{cdf}}(x,x')\big)
=
\mathrm{sign}\!\big(\eta(x)-\eta(x')\big),
\]
so the minimizers of $\mathcal{L}_{\mathrm{cdf}}$ are exactly the strictly increasing transforms
of $\eta$. Under the (sufficient) part of the assumption where $Y=m^\star(X)+\varepsilon$ with $\varepsilon$
independent of $X$ and with strictly increasing CDF, the family of conditional distributions
$\{Y\mid X=x\}$ is strictly stochastically ordered by $m^\star(x)$. Since $F_Y$ is strictly
increasing, the map $x\mapsto \eta(x)=\mathbb{E}[F_Y(Y)\mid X=x]$ is strictly increasing in
$m^\star(x)$, i.e.\ there exists a strictly increasing $h$ such that $\eta=h( m^\star)$.
Consequently,
\[
\mathrm{sign}\!\big(S_{\mathrm{cdf}}(x,x')\big)
=
\mathrm{sign}\!\big(m^\star(x)-m^\star(x')\big),
\]
and \eqref{eq:key_alignment} holds for the rank-gap weight as well. This completes the proof of all three claims.
\end{proof}

\subsection{Alternative Expression for the Empirical Weighted ranking loss}
\label{app:emp-oroloss-eq}

\begin{proposition}
\label{prop:emp-oroloss-eq}
Assume $\{(x_i,y_i)\}_{i=1}^n$ is such that there are no ties in $\{y_i\}$ and in $\{g(x_i)\}$, and $w_{ij}=w_{ji}\ge 0$ for all $i\ne j$. Then the empirical loss defined in \eqref{eq:emp-oroloss} satisfies
\[
    \ell(g)
    \;=\;
    \frac{1}{n(n-1)}\sum_{i\ne j}
    w_{ij}\,
    \mathbbm{1}\{y_i>y_j\}\,\mathbbm{1}\{g(x_i)<g(x_j)\}.
\]
\end{proposition}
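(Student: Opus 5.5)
The plan is to rewrite each summand of \eqref{eq:emp-oroloss} pair by pair and then symmetrize. Fix an unordered pair $i<j$. Since there are no ties in the $y$'s or the $g(x)$'s, the product $(y_i-y_j)(g(x_i)-g(x_j))$ is never zero. Hence the event $\{(y_i-y_j)(g(x_i)-g(x_j))\le 0\}$ is exactly the strict discordance event. This splits into two disjoint cases:
\[
\mathbbm{1}\{(y_i-y_j)(g(x_i)-g(x_j))\le 0\}
=\mathbbm{1}\{y_i>y_j\}\mathbbm{1}\{g(x_i)<g(x_j)\}+\mathbbm{1}\{y_j>y_i\}\mathbbm{1}\{g(x_j)<g(x_i)\}.
\]
Either $y_i>y_j$ and the scores are reversed, or $y_j>y_i$ and the scores are reversed. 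The two cases cannot hold simultaneously, and no other configuration yields a nonpositive product.

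Next I multiply by $w_{ij}$ and use the symmetry $w_{ij}=w_{ji}$. The second term becomes $w_{ji}\mathbbm{1}\{y_j>y_i\}\mathbbm{1}\{g(x_j)<g(x_i)\}$. This is precisely the $(j,i)$ ordered-pair version of the first term.

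Summing over $i<j$ therefore gives, for each unordered pair, the contributions of both ordered pairs $(i,j)$ and $(j,i)$. Together these are exactly the sum over all $i\neq j$. The normalizing factor $1/(n(n-1))$ is common to both expressions, so the identity follows.

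The only delicate point is the first step: the no-ties assumption on both the $y_i$ and the $g(x_i)$ is what turns ``$\le 0$'' into strict discordance and makes the two events disjoint and exhaustive. Without it, tied pairs would contribute to \eqref{eq:emp-oroloss} but not to the ordered form. The rest is bookkeeping with the index swap.
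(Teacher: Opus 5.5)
Your proof is correct and follows essentially the same route as the paper. Both split the discordance indicator into the two disjoint reversed-order events, then use $w_{ij}=w_{ji}$ to identify the second term as the $(j,i)$ ordered-pair term, so the sum over $i<j$ becomes the sum over $i\neq j$. You are slightly more explicit than the paper that the no-ties assumption is what turns the ``$\le 0$'' in \eqref{eq:emp-oroloss} into strict discordance.
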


\begin{proof}
The event $(y_i-y_j)\big(g(x_i)-g(x_j)\big)<0$ occurs if and only if exactly one of the two differences is reversed:
\[
\mathbbm{1}\!\left\{(y_i-y_j)\big(g(x_i)-g(x_j)\big)<0\right\}
=
\mathbbm{1}\{y_i>y_j,\,g(x_i)<g(x_j)\}
+\mathbbm{1}\{y_i<y_j,\,g(x_i)>g(x_j)\}.
\]
Plugging this into \eqref{eq:emp-oroloss} yields
\begin{align*}
\ell(g)
&= \frac{1}{n(n-1)}\sum_{1\le i<j\le n}
    w_{ij}\Big[
    \mathbbm{1}\{y_i>y_j,\,g(x_i)<g(x_j)\}
    +\mathbbm{1}\{y_i<y_j,\,g(x_i)>g(x_j)\}
    \Big].
\end{align*}

Now consider the quantity $S$ defined as:
\begin{align*}
    S
&\equiv \sum_{i\ne j}
    w_{ij}\,
    \mathbbm{1}\{y_i>y_j\}\,\mathbbm{1}\{g(x_i)<g(x_j)\}.\\
    &= \sum_{1\le i<j\le n}
\Big[
    w_{ij}\,\mathbbm{1}\{y_i>y_j,\,g(x_i)<g(x_j)\}
    +
    w_{ji}\,\mathbbm{1}\{y_j>y_i,\,g(x_j)<g(x_i)\} \Big] \\
    &= \sum_{1\le i<j\le n}
w_{ij}\Big[
    \mathbbm{1}\{y_i>y_j,\,g(x_i)<g(x_j)\}
    +\mathbbm{1}\{y_i<y_j,\,g(x_i)>g(x_j)\}
\Big].
\end{align*}
where we rewrote $S$ by grouping terms over unordered pairs $\{i,j\}$ with $i<j$ in the second line and used the fact that $y_j>y_i$ and $g(x_j)<g(x_i)$ are exactly the negations of $y_i>y_j$ and $g(x_i)<g(x_j)$ in the third. Comparing this with the expression for $\ell(g)$ above, we obtain
\[
\ell(g)
= \frac{1}{n(n-1)}\,S
= \frac{1}{n(n-1)}\sum_{i\ne j}
    w_{ij}\,
    \mathbbm{1}\{y_i>y_j\}\,\mathbbm{1}\{g(x_i)<g(x_j)\},
\]
which is exactly the claimed identity.
\end{proof}

\subsection{Rank based expression of the empirical Gini covariance loss} \label{gini_form}

\begin{proof}
We derive the rank-based expression for the empirical Gini-covariance objective used in Stage~1.  
Let $\widehat{\mathrm{Cov}}(\cdot,\cdot)$ denote the empirical covariance, $\widehat{\mathrm{Cov}}(a_i,b_i)\equiv \frac{1}{n}\sum_{i=1}^n (a_i-\bar a)(b_i-\bar b),
$ and define the empirical CDF of the scores by $\widehat F_{g(X)}(t)\equiv \frac{1}{n}\sum_{j=1}^n \mathbbm{1}\{g(x_j)\le t\}$. Note that: 
\begin{align}
\widehat F_{g(X)}(g(x_i))
&= \frac{1}{n}\sum_{j=1}^n \mathbbm{1}\{g(x_j)\le g(x_i)\}
 = \frac{\mathrm{rank}(g(x_i))}{n}.
\label{eq:emp_cdf_is_rank}
\end{align}
Plugging \eqref{eq:emp_cdf_is_rank} into the loss gives
\begin{align}
\ell_{\textup{abs}}(y,g(x))
&\equiv -2\,\widehat{\mathrm{Cov}}\!\Big(y_i,\frac{\mathrm{rank}(g(x_i))}{n}\Big) \notag\\
&= -\frac{2}{n}\,\widehat{\mathrm{Cov}}\!\big(y_i,\mathrm{rank}(g(x_i))\big) \notag\\
&= -\frac{2}{n}\cdot \frac{1}{n}\sum_{i=1}^n \big(y_i-\bar y\big)\Big(\mathrm{rank}(g(x_i))-\overline{\mathrm{rank}}\Big),
\label{eq:cov_expand}
\end{align}
where $\overline{\mathrm{rank}}=\frac{1}{n}\sum_{i=1}^n \mathrm{rank}(g(x_i))$.
Since $\{\mathrm{rank}(g(x_i))\}_{i=1}^n$ is a permutation of $\{1,\dots,n\}$, we have $\overline{\mathrm{rank}}=\frac{1}{n}\sum_{r=1}^n r = \frac{n+1}{2}.
$
Expanding \eqref{eq:cov_expand} and using $\sum_{i=1}^n (y_i-\bar y)=0$ yields
\begin{align}
\ell_{\textup{abs}}(y,g(x))
&= -\frac{2}{n^2}\sum_{i=1}^n \big(y_i-\bar y\big)\Big(\mathrm{rank}(g(x_i))-\tfrac{n+1}{2}\Big) \notag\\
&= -\frac{2}{n^2}\sum_{i=1}^n \big(y_i-\bar y\big)\,\mathrm{rank}(g(x_i)) 
    + \frac{2}{n^2}\cdot \frac{n+1}{2}\sum_{i=1}^n \big(y_i-\bar y\big) \notag\\
&= -\frac{2}{n^2}\sum_{i=1}^n \big(y_i-\bar y\big)\,\mathrm{rank}(g(x_i)) \notag\\
&= -\frac{2}{n^2}\left(\sum_{i=1}^n y_i\,\mathrm{rank}(g(x_i))
      -\bar y\sum_{i=1}^n \mathrm{rank}(g(x_i))\right) \notag\\
&= -\frac{2}{n^2}\left(\sum_{i=1}^n y_i\,\mathrm{rank}(g(x_i))
      -\bar y\cdot \frac{n(n+1)}{2}\right).
\label{eq:gini_rank_final}
\end{align}
The second term in \eqref{eq:gini_rank_final} is a constant with respect to $g$, and therefore, up to an additive constant and a positive scalar, minimizing $\ell_{\textup{abs}}$ is equivalent to minimizing
\[
-\sum_{i=1}^n y_i\,\mathrm{rank}(g(x_i)),
\]
which is the desired rank-based form.
\end{proof}

\end{document}